\def\ps@headings{%
\def\@oddhead{\mbox{}\scriptsize\rightmark \hfil \thepage}%
\def\@evenhead{\scriptsize\thepage \hfil \leftmark\mbox{}}%
\def\@oddfoot{}%
\def\@evenfoot{}}
\newcommand{\E}{{\mathcal E}}
\newcommand{\G}{{\mathcal G}}
\newcommand{\HH}{{\mathcal H}}
\newcommand{\I}{{\mathcal I}}
\newcommand{\J}{{\mathcal J}}
\newcommand{\LL}{{\mathcal L}}
\newcommand{\X}{{\mathcal X}}
\newcommand{\Y}{{\mathcal Y}}
\newcommand{\V}{{\mathcal V}}
\newcommand{\R}{{\mathcal R}}
\newcommand{\sS}{{\mathcal S}}
\newcommand{\N}{{\mathcal N}}
\newcommand{\cC}{{\mathscr{C}}}
\newcommand{\bM}{{\boldsymbol M}}
\newcommand{\bH}{{\boldsymbol H}} 
\newcommand{\bHi}{{\boldsymbol H}^{(i)}} 
\newcommand{\bL}{{\boldsymbol L}}
\newcommand{\bB}{{\boldsymbol B}}
\newcommand{\supp}{{\sf supp}}\newcommand{\dist}{{\mathsf{d}}}
\newcommand{\weight}{{\mathsf{wt}}}
\newcommand{\spn}{{\mathsf{span}}}
\newcommand{\clspn}{{\mathsf{colspan}}}
\newcommand{\rank}{{\mathsf{rank}}}
\newcommand{\mr}{{\text{min-rank}}}
\newcommand{\mrt}{{\text{min-rank}_2}}
\newcommand{\mrq}{{\text{min-rank}_q}}
\newcommand{\define}{\stackrel{\mbox{\tiny $\triangle$}}{=}}
\newcommand{\bu}{{\boldsymbol u}}
\newcommand{\bv}{{\boldsymbol v}}
\newcommand{\by}{{\boldsymbol y}}
\newcommand{\bc}{{\boldsymbol c}}
\newcommand{\bO}{{\boldsymbol 0}}
\newcommand{\bff}{{\boldsymbol f}}
\newcommand{\bg}{{\boldsymbol{g}}}
\newcommand{\bx}{{\boldsymbol{x}}}
\newcommand{\bz}{{\boldsymbol{z}}}
\newcommand{\bG}{{\boldsymbol{G}}}
\newcommand{\be}{{\boldsymbol e}}
\newcommand{\bep}{{\boldsymbol \epsilon}}
\newcommand{\beph}{\hat{\boldsymbol \epsilon}}
\newcommand{\bbti}{{{\boldsymbol \beta}_i}}
\newcommand{\bvi}{{\boldsymbol v}_{i}}
\newcommand{\Na}{{N_q[\alpha(\HH),2\delta+1]}}
\newcommand{\Nk}{{N_q[\kappa_q(\HH),2\delta+1]}}
\newcommand{\NX}{\N_q[\HH,\delta]}
\newcommand{\aG}{{\alpha(\HH)}}
\newcommand{\kG}{{\kappa_q(\HH)}}
\newcommand{\IX}{{\I(q,\HH)}}
\newcommand{\JX}{{\J(\HH)}}
\newcommand{\bxh}{{\hat{\boldsymbol{x}}}}
\newcommand{\fkD}{{\mathfrak D}}
\newcommand{\fkE}{{\mathfrak E}}
\newcommand{\fkJ}{{\mathfrak J}}
\newcommand{\al}{\alpha}
\newcommand{\kp}{\kappa}
\newcommand{\Ga}{\Gamma}
\newcommand{\Gnr}{\Gamma(n,\rho)}
\newcommand{\seq}{\subseteq}
\newcommand{\mic}{{$\HH$-IC}}
\newcommand{\dd}{{{$(\delta,\HH)$-ECIC}}}
\newcommand{\ra}{\rightarrow}
\renewcommand{\ge}{\geqslant}
\renewcommand{\le}{\leqslant}
\newcommand{\ff}{\mathbb{F}}
\newcommand{\fq}{\mathbb{F}_q}
\newcommand{\fqn}{\mathbb{F}_q^n}
\newcommand{\ft}{\mathbb{F}_2}
\newcommand{\et}{{\emph{et. al.}}}
\newcommand{\iss}{{(m^*,n,\X^*,f^*)}}
\newcommand{\issn}{{m^*,n,\X^*,f^*}}
\newcommand{\Hs}{{\HH^*}}
\newtheorem{theorem}{Theorem}[section]
\newtheorem{proposition}[theorem]{Proposition}
\newtheorem{lemma}[theorem]{Lemma}
\newtheorem{corollary}[theorem]{Corollary}
\newtheorem{definition}[theorem]{Definition}
\newtheorem{example}[theorem]{Example}
\newtheorem{remark}[theorem]{Remark}
\title{Error Correction for Index Coding with Side Information}
\begin{document}

\author{
  \IEEEauthorblockN{Son Hoang Dau$^*$, Vitaly Skachek$^{\dagger,1}$, and Yeow Meng Chee$^\ddagger$ \vspace{1ex}} \\
  \IEEEauthorblockA{$^{*,\ddagger}$Division of Mathematical Sciences,
    School of Physical and Mathematical Sciences\\
    Nanyang Technological University,
    21 Nanyang Link, Singapore 637371 \vspace{1ex}\\
    $^{\dagger}$Coordinated Science Laboratory, University of Illinois at Urbana-Champaign \\
    1308 W. Main Street, Urbana, IL 61801, USA \vspace{1ex} \\
    Emails: {\it $^{*}$daus0002@ntu.edu.sg, $^{\dagger}$vitalys@illinois.edu, $^{\ddagger}$YMChee@ntu.edu.sg }
  }
}

\maketitle

\footnotetext[1]{The work of this author was done while he was with the Division of Mathematical Sciences, School of Physical and Mathematical Sciences, Nanyang Technological University, 21 Nanyang Link, Singapore 637371.

A part of this work is to be presented in the \emph{IEEE International Symposium on Information Theory (ISIT)}, St. Petersburg, Russia, July-August 2011.}

\begin{abstract}
\boldmath
A problem of index coding with side information was first considered by Y. Birk and T. Kol \emph{(IEEE INFOCOM, 1998)}.
In the present work, a generalization of index coding scheme, where transmitted symbols are subject to errors, is studied. 
Error-correcting methods for such a scheme, and their parameters, are investigated. 
In particular, the following question is discussed:
given the side information hypergraph of index coding scheme and the maximal number of erroneous symbols $\delta$, 
what is the shortest length of a linear index code, such that every receiver is able to recover the required information? 
This question turns out to be a generalization of the problem of finding a shortest-length 
error-correcting code with a prescribed error-correcting capability 
in the classical coding theory. 

The Singleton bound and two other bounds, referred to as the $\al$-bound and
the $\kp$-bound, for the optimal length of a linear error-correcting index code (ECIC)
are established. For large alphabets, a construction based on concatenation of an optimal index 
code with an MDS classical code, is shown to attain the Singleton bound. 
For smaller alphabets, however, this construction may not be optimal. 
A random construction is also analyzed. It yields another inexplicit 
bound on the length of an optimal linear ECIC. 

Further, the problem of error-correcting decoding by a linear ECIC is studied. 
It is shown that in order to decode correctly the desired symbol, 
the decoder is required to find one of the vectors, belonging to an 
affine space containing the actual error vector. The syndrome decoding is shown to produce the correct 
output if the weight of the error pattern is less or equal to the error-correcting capability of the 
corresponding ECIC.  

Finally, the notion of static ECIC, which is suitable for use with a family of instances of an index coding problem, 
is introduced. Several bounds on the length of static ECIC's are derived, 
and constructions for static ECIC's are discussed. Connections of these codes to weakly   
resilient Boolean functions are established. 
\end{abstract}

\begin{keywords}
\boldmath
index coding, network coding, side information, error correction, minimum distance, broadcast.
\end{keywords}
\section{Introduction}
\label{sec:introduction}

\subsection{Background}

\PARstart{T}he problem of Index Coding with Side Information (ICSI) was introduced by Birk and Kol \cite{BirkKol98}, \cite{BirkKol2006}. 
During the transmission, each client might miss a certain part of the data, due to intermittent reception, limited storage capacity or any other reasons. Via a slow backward channel, the clients let the server know which messages they already have in their possession, and which messages they are interested to receive. The server has to find a way to deliver to each client all the messages he requested, yet spending a minimum number of transmissions. As it was shown in \cite{BirkKol98}, the server can significantly reduce the number of transmissions by coding the messages. 

The toy example in Figure 1 presents a scenario with one broadcast transmitter and four receivers. Each receiver requires 
a different information packet (we sometimes simply call it message). 
The na\"ive approach requires four separate transmissions, one transmission per an information
packet. However, by exploiting the knowledge on the subsets of messages that clients already have, and by using coding of the transmitted data, the server can just broadcast one coded packet. 

\vspace{-45pt}
\begin{figure}[h]
\begin{flushright}
\scalebox{1} 
{
\begin{pspicture}(0,-4.9003124)(7.7978125,5.0596876)
\definecolor{color1106b}{rgb}{0.8,0.8,0.8}
\pscircle[linewidth=0.04,dimen=outer,fillstyle=solid,fillcolor=color1106b](3.3278124,0.0496875){0.55}
\pscircle[linewidth=0.04,dimen=outer,fillstyle=solid,fillcolor=color1106b](1.3678125,1.9896874){0.51}
\pscircle[linewidth=0.04,dimen=outer,fillstyle=solid,fillcolor=color1106b](1.3278126,-1.9703125){0.51}
\pscircle[linewidth=0.04,dimen=outer,fillstyle=solid,fillcolor=color1106b](5.3478127,2.0096874){0.51}
\pscircle[linewidth=0.04,dimen=outer,fillstyle=solid,fillcolor=color1106b](5.3278127,-1.9303125){0.51}
\usefont{T1}{ptm}{m}{n}
\rput(3.3552186,0.0296875){$S$}
\usefont{T1}{ptm}{m}{n}
\rput(5.315219,-1.9303125){$R_3$}
\usefont{T1}{ptm}{m}{n}
\rput(5.315219,2.0096874){$R_4$}
\usefont{T1}{ptm}{m}{n}
\rput(1.3552188,1.9896874){$R_1$}
\usefont{T1}{ptm}{m}{n}
\rput(1.3152188,-1.9703125){$R_2$}
\pscircle[linewidth=0.022,linestyle=dashed,dash=0.16cm 0.16cm,dimen=outer](3.3578124,0.0196875){1.04}
\pscircle[linewidth=0.018,linestyle=dashed,dash=0.16cm 0.16cm,dimen=outer](3.3478124,0.0496875){1.75}
\pscircle[linewidth=0.02,linestyle=dashed,dash=0.16cm 0.16cm,dimen=outer](3.3378124,0.0396875){2.56}
\usefont{T1}{ptm}{m}{n}
\rput(4.8192186,0.0696875){$\sum_{i = 1}^4 x_i$}
\usefont{T1}{ptm}{m}{n}
\rput(5.777656,3.2296875){has $x_1,x_2,x_3$}
\usefont{T1}{ptm}{m}{n}
\rput(5.6076565,2.7896874){requests $x_4$}
\usefont{T1}{ptm}{m}{n}
\rput(1.4176563,3.2496874){has $x_2,x_3,x_4$}
\usefont{T1}{ptm}{m}{n}
\rput(1.2276562,2.7896874){requests $x_1$}
\usefont{T1}{ptm}{m}{n}
\rput(1.3776562,-2.7503126){has $x_1,x_3,x_4$}
\usefont{T1}{ptm}{m}{n}
\rput(1.2276562,-3.1903124){requests $x_2$}
\usefont{T1}{ptm}{m}{n}
\rput(5.777656,-2.7503126){has $x_1,x_2,x_4$}
\usefont{T1}{ptm}{m}{n}
\rput(5.6076565,-3.2103126){requests $x_3$}
\end{pspicture} 
}
\end{flushright}
\caption{An example of the ICSI problem}
\end{figure}
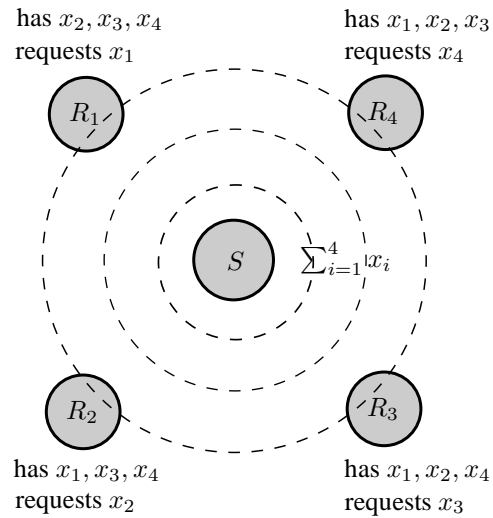

Possible applications of index coding include communications scenarios, in which a satellite or a server broadcasts a set of messages to a set clients, such as daily newspaper delivery or video-on-demand. Index coding with side information can also be used in opportunistic wireless networks. These are the networks in which a wireless node can opportunistically listen to the wireless channel. The client may obtain packets that are not designated to it (see \cite{Rouayheb2009, Katti2006, Katti2008}). As a result, a node obtains some side information about the transmitted data. Exploiting this additional knowledge may help to increase the throughput of the system. 

The ICSI problem has been a subject of several recent studies \cite{Yossef, Yossef-journal, LubetzkyStav, Wu, Rouayheb2007, Rouayheb2008, Rouayheb2009, ChaudhrySprintson, Alon}. This problem can be viewed as a special case of the Network Coding (NC) problem~\cite{Ahlswede},~\cite{KoetterMedard2003}. In particular, as it was shown in~\cite{Rouayheb2008, Rouayheb2009}, every instance of the NC problem can be reduced to an instance of the ICSI problem.

\subsection{Our contribution}

The preceding works on the ICSI problem consider scenario where the transmissions are error-free. 
In practice, of course, this might not be the case. In this work, we assume that the transmitted symbols 
are subject to errors. We extend some known results on index coding to a case where any receiver 
can correct up to a certain number of errors. It turns out that the problem of designing such error-correcting 
index codes (ECIC's) naturally generalizes the problem of constructing classical error-correcting codes. 

More specifically, assume that the number of messages that the server possesses is $n$, and that the designed
maximal number of errors is $\delta$. We show that the problem of constructing ECIC of minimal possible length 
is equivalent to the problem of constructing a matrix $\bL$ which has $n$ rows and the minimal possible 
number of columns, such that
\[
\weight\left(\bz \bL\right) \geq 2\delta + 1 \text{ for all } \bz \in \I, 
\]
where $\I$ is a certain subset of $\fq^n \backslash \{ \bO \}$. Here $\weight(\bx)$ denotes the Hamming weight of the vector $\bx$, $\fq$ stands for a finite field with $q$ elements, 
and $\bO$ is the all-zeros vector. 
If $\I = \fqn \backslash \{ \bO \}$, this problem becomes equivalent to the problem of
designing a shortest-length linear code of given dimension and minimum distance.  

In this work, we establish an upper bound (the $\kp$-bound) and a lower bound (the $\al$-bound) 
on the shortest length of a linear ECIC, which is able to correct any error pattern of size up to $\delta$. 
More specifically, let $\HH$ be the side information hypergraph that describes the instance of the ICSI problem.   
Let $\NX$ denote the length of 
a shortest-length linear ECIC over $\fq$, such that every $R_i$ can recover the desired message,
if the number of errors is at most $\delta$. We use notation $N_q[k,d]$ for the length of an optimal
linear error-correcting code of dimension $k$ and minimum distance $d$ over $\fq$. We obtain 
\begin{equation}
\label{EE1} 
\Na \leq \NX \leq \Nk,
\end{equation} 
where $\aG$ is the generalized independence number and $\kG$ is the min-rank (over $\fq$) of $\HH$. 

For linear index codes, we also derive an analog of the Singleton bound. This result  
implies that (over sufficiently large alphabet) the concatenation of a standard MDS error-correcting code 
with an optimal linear index code yields an optimal linear error-correcting index code. Finally, we consider 
random ECIC's. By analyzing its parameters, we obtain an upper bound on its length. 

When the side information hypergraph is a pentagon, and $\delta = 2$, 
the inequalities in~(\ref{EE1}) are shown to be strict. This implies that a concatenated scheme based on 
a classical error-correcting code and on a linear non-error-correcting index code does not necessarily yield 
an optimal linear error-correcting index code. Since ICSI problem can also be viewed as a source coding
problem~\cite{Yossef,Alon}, this example demonstrates that sometimes 
designing a single code for both source and channel coding can result in 
a smaller number of transmissions. 

The decoding of a linear ECIC is somewhat different from that of a classical error-correcting code.
There is no longer a need for a complete recovery of the whole information vector. 
We analyze the decoding criteria for the ECIC's and show that the syndrome decoding, which might be different for each 
receiver, results in a correct result, provided that the number of errors does not exceed
the error-correcting capability of the code.       

An ECIC is called static under a family of instances of the ICSI problem if it works for all of these instances. 
Such an ECIC is interesting since it remains useful as long as the parameters of the problem
vary within a particular range. Bounds and constructions for static ECIC's are studied in Section~\ref{sec:StaticECIC}. 
Connections between static ECIC's and weakly resilient vectorial Boolean functions are also discussed. 

The problem of error correction for NC was studied in several previous works. However, these results are not 
directly applicable to the ICSI problem. First, 
there is only a very limited variety of results for non-multicast networks in the existing literature. 
The ICSI problem, however, is a special case of the non-multicast 
NC problem. 
Second, the ICSI problem can be modeled by the NC scenario~\cite{Rouayheb2009}, 
yet, this requires that there are directed edges from particular sources to each sink, 
which provide the side information. The symbols transmitted on these special edges are not allowed to be corrupted. 
By contrast, for error-correcting NC, symbols transmitted on all edges can be corrupted. 

The paper is organized as follows. Basic notations and definitions, used throughout the paper, are provided in Section~\ref{sec:preliminaries}. The problem of index coding with and without error-correction is introduced in Section~\ref{sec:ic_ecc}. Some basic results are presented in that section. The $\alpha$-bound and the $\kappa$-bound
are derived in Section~\ref{subsec:bounds}. The Singleton bound is presented in Section~\ref{sec:singleton}. 
Random codes are discussed in Section~\ref{sec:random}. Syndrome decoding is studied in Section~\ref{subsec:decoding}. 
A notion of static error-correcting index codes is presented in Section~\ref{sec:StaticECIC}.
Several bounds on the length of such codes are derived,
and connections to resilient function are shown in that section. 
Finally, the results are summarized in Section~\ref{sec:conclusion}, and some open questions are proposed therein.   

\section{Preliminaries}
\label{sec:preliminaries}

In this section we introduce some useful notation. 
Here $\fq$ is the finite field of $q$ elements, where $q$ is a power of prime, and $\fq^*$ 
is the set of all nonzero elements of $\fq$. 

Let $[n] = \{1,2,\ldots,n\}$. For the vectors $\bu = (u_1, u_2, \ldots, u_n) \in \fq^n$ and $\bv = (v_1, v_2, \ldots, v_n) \in \fq^n$, 
the (Hamming) distance between $\bu$ and $\bv$ is defined to be the number of coordinates where $\bu$ and $\bv$ differ, 
namely, 
\[
\dist(\bu,\bv) = |\{i \in [n] \; : \; u_i \ne v_i\}| \; . 
\]
If $\bu \in \fq^n$ and $\bM \subseteq \fq^n$ is a set of vectors (or a vector subspace), 
then the last definition can be extended to 
\[
\dist(\bu,\bM) = \min_{\bv \in \bM} \dist(\bu, \bv) \; . 
\]

The \emph{support} of a vector $\bu \in \fqn$ is defined to be the set $\text{supp}(\bu) = \{i \in [n]: u_i \ne 0\}$.  
The (Hamming) weight of a vector $\bu$, denoted $\weight(\bu)$, is defined to be $|\text{supp}(\bu)|$, the number of nonzero coordinates of $\bu$. Suppose $E \subseteq [n]$. We write $\bu \lhd E$ whenever $\supp(\bu) \seq E$.  

A $k$-dimensional subspace $\cC$ of $\fq^n$ is called a linear $[n,k,d]_q$ code over $\fq$ if the minimum distance of $\cC$, 
\[
\dist(\cC) \define \min_{\bu \in \cC, \; \bv \in \cC, \; \bu \neq \bv} \dist(\bu,\bv) \; , 
\]
is equal to $d$. Sometimes we may use the notation $[n,k]_q$ for the sake of simplicity. The vectors in $\cC$ are called codewords. It is easy to see that the minimum weight of a nonzero codeword in a linear code $\cC$ is equal to its minimum distance $\dist(\cC)$. A \emph{generator matrix} $\bG$ of an $[n,k]_q$ code $\cC$ is a $k \times n$ matrix whose rows are linearly independent codewords of $\cC$. Then $\cC = \{\by \bG: \by \in \fq^k\}$. 
The \emph{parity-check matrix} of $\cC$ is an $(n-k) \times n$ matrix $\bH$ over $\fq$ such that 
$\bc \in \cC \Leftrightarrow \bH \bc^T = \bO^T$. 
Given $q$, $k$, and $d$, let $N_q[k,d]$ denote the length of the shortest linear code over $\fq$ which has dimension $k$ and minimum distance $d$.  

We use $\be_i = (\underbrace{0,\ldots,0}_{i-1},1,\underbrace{0,\ldots,0}_{n-i}) \in \fqn$ to denote the unit vector, which has a one at the $i$th position, and zeros elsewhere.
For a vector $\by = (y_1,y_2,\ldots,y_n)$ and a subset $B = \{i_1,i_2,\ldots,i_b\}$ of $[n]$, where $i_1 < i_2 < \cdots <i_b$, let $\by_B$ denote the vector $(y_{i_1},y_{i_2},\ldots,y_{i_b})$. 

For an $n \times N$ matrix $\bL$, let $\bL_i$ denote its $i$th row. For a set $E \subseteq [n]$, 
let $\bL_E$ denote the $|E| \times N$ matrix obtained from $\bL$ by deleting all the rows of $\bL$ which are not indexed by the elements of $E$. 
For a set of vectors $\bM$, we use notation $\spn(\bM)$ to denote the linear space spanned by the vectors in $\bM$.  
We also use notation $\clspn(\bL)$ for the linear space spanned by the columns of the matrix $\bL$. 

Let $\G = (\V, \E)$ be a graph with a vertex set $\V$ and an edge set $\E$.
The graph is called \emph{undirected} if every edge $e \in \E$, $e = \{ u, v \}$, and $u,v \in \V$. 
A graph $\G$ is \emph{directed} if every edge $e \in \E$ is an ordered pair $e = (u, v)$, 
$u, v \in \V$. A directed graph $\G$ is called \emph{symmetric} if 
\[
(u, v) \in \E \quad \Leftrightarrow \quad (v, u) \in \E \; . 
\]
There is a natural correspondence between undirected graph $\G = (\V, \E)$ and 
directed symmetric graph $\G' = (\V, \E')$ defined as 
\begin{equation}
\E = \left\{ \{ u, v \} \; : \; (u, v) \in \E' \right\} \; . 
\label{eq:correspond}
\end{equation}
Let $\G$ be an undirected graph. A subset of vertices $\sS \subseteq \V$ is called an \emph{independent set} if 
$\forall u, v \in \sS$, $\{u, v \} \notin \E$.
The size of the largest independent set in $\G$ is called the \emph{independence number} of $\G$,
and is denoted by $\al(\G)$. 
The graph $\bar{\G} = (\V, \bar{\E})$ is called the \emph{complement} of $\G = (\V, \E)$ if 
\[
\bar{\E} = \left\{ \{u, v\} \; : \; u \in \V, \, v \in \V, \, \{u,v\} \notin \E \right\} \; . 
\]
A \emph{coloring} of $\G$ using $\chi$ colors is a function
$\psi \; : \; \V \rightarrow [\chi]$, such that 
\[
\forall e = \{u, v\} \in \E \; : \; \psi(u) \neq \psi(v) \; . 
\]
The \emph{chromatic number} of $\G$ is the smallest number $\chi$ such that 
there exists a coloring of $\G$ using $\chi$ colors, and it is denoted by $\chi(\G)$.  
By using the correspondence~(\ref{eq:correspond}), the definitions of independence number, 
graph complement and chromatic number are trivially extended to directed symmetric graphs. 

\section{Index Coding and Error Correction}
\label{sec:ic_ecc}

\subsection{Index Coding with Side Information}
\label{subsec:icsi}

Index Coding with Side Information problem considers the following communications scenario. 
There is a unique sender (or source) $S$, who has a vector of messages 
$\bx = (x_1, x_2, \ldots, x_n)$ in his possession. 
There are also $m$ receivers $R_1,R_2,\ldots,R_m$, receiving information from $S$ via a broadcast channel. 
For each $i \in [m]$, $R_i$ has side information, 
i.e. $R_i$ owns a subset of messages $\{ x_j \}_{j \in \X_i}$, where $\X_i \subseteq [n]$. 
Each $R_i$, $i \in [m]$, 
is interested in receiving the message $x_{f(i)}$ (we say that $R_i$ requires $x_{f(i)}$), 
where the mapping $f: [m] \ra [n]$ satisfies $f(i) \notin \X_i$ for all $i \in [m]$. 
Hereafter, we use the notation $\X = (\X_1, \X_2, \ldots, \X_m)$. 
An instance of the ICSI problem is given by a quadruple $(m,n,\X,f)$. 
It can also be conveniently described by a directed hypergraph~\cite{Alon}. 

\vskip 10pt
\begin{definition}
Let $(m, n, \X, f)$ be an instance of the ICSI problem.  
The corresponding \emph{side information (directed) hypergraph} $\HH = \HH(m,n,\X,f)$ is defined by the vertex set
$\V = [n]$ and the edge set $\E_\HH$, where 
\[
\E_\HH = \{ (f(i), \X_i) \; : \; i \in [n]\} \; . 
\]
We often refer to $(m,n,\X,f)$ as an instance of the ICSI problem described by the hypergraph $\HH$. 
\end{definition}
\vskip 10pt 

Each side information hypergraph $\HH = (\V, \E_\HH)$ can be 
associated with the directed graph $\G_\HH = (\V,\E)$ in the following way. 
For each directed edge $(f(i), \X_i) \in \E_\HH$ there will be $|\X_i|$ directed edges $(f(i), v) \in \E$, for $v \in \X_i$. 
When $m = n$ and $f(i) = i$ for all $i \in [m]$, the graph $\G_\HH$ is, in fact, the \emph{side information graph}, 
defined in~\cite{Yossef}. 

The goal of the ICSI problem is to design a coding scheme that allows $S$ to satisfy 
the requests of all receivers $R_i$ in the least number of transmissions. More formally,
we have the following definition. 

\medskip
\begin{definition} 
An \emph{index code} over $\fq$ for an instance of the ICSI problem described by $\HH = \HH(m,n,\X,f)$  
(or just an {\mic} over $\fq$), is an encoding function
\begin{eqnarray*}
\fkE & : & \fq^n \rightarrow \fq^N \; , 
\end{eqnarray*}
such that for each receiver $R_i$, $i \in [m]$, there exists a decoding function
\[
\fkD_i \: : \: \fq^N \times \fq^{|\X_i|} \rightarrow \fq \; , \\
\]
satisfying
\[
\forall \bx \in \fq^n \; : \; \fkD_i(\fkE(\bx), \bx_{\X_i}) = x_{f(i)} \; .
\]
Sometimes we refer to such $\fkE$ as a \emph{non-error-correcting} index code. 
The parameter $N$ is called the \emph{length} of the index code. 
In the scheme corresponding to this code, 
$S$ broadcasts a vector $\fkE(\bx)$ of length $N$ over $\fq$.   
\label{def:IC}
\end{definition}
\medskip
\begin{definition} 
A \emph{linear index code} is an index code, for which the encoding function 
$\fkE$ is a linear transformation over $\fq$. 
Such a code can be described as 
\[
\forall \bx \in \fq^n \; : \; \fkE (\bx) = \bx \bL \; , 
\]
where $\bL$ is an $n \times N$ matrix over $\fq$. The matrix $\bL$ is called the \emph{matrix corresponding 
to the index code $\fkE$}. The code $\fkE$ is also referred to as the \emph{linear index code based on $\bL$}. 
\end{definition}
\vskip 10pt 

Hereafter, we assume that $\X = (\X_i)_{i \in [m]}$ is known to $S$.
Moreover, we also assume that the code $\fkE$ is known to each receiver $R_i$, $i \in [m]$. In practice 
this can be achieved by a preliminary communication session, when the knowledge of the sets 
$\X_i$ for $i \in [m]$ and of the code $\fkE$ are disseminated between the participants of 
the scheme. 

\medskip
\begin{definition}
Suppose $\HH = \HH(m,n,\X,f)$ corresponds to an instance of the ICSI problem. 
Then the \emph{min-rank} of $\HH$ over $\ff_q$ is defined as 
\begin{multline*}
\kp_q(\HH) \define \min\{\rank_{\fq}(\{\bvi + \be_{f(i)}\}_{i \in [m]}) \; : \; \\ \bvi \in \fqn \; , \; \bvi \lhd \X_i\} \; .
\end{multline*} 
\end{definition} 
\medskip
Observe that $\kp_q(\HH)$ generalizes the $\mr$ over $\fq$ of the side information graph, which was defined in~\cite{Yossef}.
More specifically, when $m = n$ and $f(i) = i$ for all $i \in [m]$, $\G_\HH$ becomes the side information graph, 
and $\kp_q(\HH) = \mrq(\G_\HH)$. 
The $\mr$ of an undirected graph was first introduced by Haemers~\cite{Haemers1978} to bound 
the Shannon capacity of a graph, and was later proved in~\cite{Yossef, Yossef-journal} to be the smallest number 
of transmissions in a linear index code. 

The following lemma was implicitly formulated in \cite{Yossef} for the case where $m = n$, $q = 2$, $f(i) = i$ for all $i \in [n]$, 
and generalized to its current form in~\cite{DauSkachekChee2010}.

\vskip 10pt 
\begin{lemma}
\label{lem:recovery}
Consider an instance of the ICSI problem described by $\HH = \HH(m,n,\X,f)$ . 
\begin{enumerate}
\item
The matrix $\bL$ corresponds to 
a linear {\mic} over $\fq$ if and only if for each $i \in [m]$ there exists $\bvi \in \fqn$ such that
\begin{itemize}
	\item $\bvi \lhd \X_i$ ; 
  \item $\bvi + \be_{f(i)} \in \clspn(\bL)$ . 
\end{itemize}     
\item The smallest possible length of a linear {\mic} over $\fq$ is $\kp_q(\HH)$. 
\end{enumerate}
\end{lemma}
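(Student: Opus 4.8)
The plan is to prove both parts by unwinding the definitions of a linear index code and of $\kp_q(\HH)$. For part~(1), first suppose $\bL$ is an $n \times N$ matrix corresponding to a linear {\mic} over $\fq$. Fix a receiver $R_i$. Its side information is $\bx_{\X_i}$ and it observes $\bx\bL$, so the decoding function $\fkD_i$ must recover $x_{f(i)}$ from the pair $(\bx\bL, \bx_{\X_i})$. The key observation is that $R_i$ can, without loss of generality, subtract off the contribution of its known messages: the value $\bx\bL - \sum_{j \in \X_i} x_j \bL_j$ depends only on the coordinates $x_\ell$ with $\ell \notin \X_i$. Correct decoding then forces that two message vectors $\bx, \bx'$ which agree on $\X_i$ but differ in coordinate $f(i)$ cannot produce the same such reduced observation; equivalently, $\be_{f(i)}\bL$ must lie in the span of $\{\bL_j : j \in \X_i\}$ together with the row space spanned so as to be cancellable. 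Making this precise, one shows $R_i$ can decode iff there exist coefficients $(\bvi)_j$, $j \in \X_i$, with $\be_{f(i)}\bL = -\sum_{j\in\X_i}(\bvi)_j \bL_j$, i.e.\ setting $\bvi \lhd \X_i$ appropriately, $(\bvi + \be_{f(i)})\bL = \bO$ \dots\ wait, that is the dual statement; the cleaner route is to argue in the column space: $R_i$ can recover $x_{f(i)}$ iff the linear functional $\bx \mapsto x_{f(i)}$ factors through the map $\bx \mapsto (\bx\bL, \bx_{\X_i})$, which by linear algebra happens iff $\be_{f(i)} \in \spn(\{\be_j : j \in \X_i\}) + \clspn(\bL)^{\perp}\dots$ — rather than chase signs, I would phrase it as: $R_i$ decodes iff $\exists \bvi \lhd \X_i$ with $\bvi + \be_{f(i)} \in \clspn(\bL)$, proving each direction by exhibiting the decoder (given such $\bvi$, $R_i$ computes the appropriate linear combination of the entries of $\bx\bL$ that equals $(\bvi+\be_{f(i)})\cdot\bx = x_{f(i)} + \sum_{j\in\X_i}(\bvi)_j x_j$, then subtracts the known $\sum_{j\in\X_i}(\bvi)_j x_j$) and conversely extracting such a $\bvi$ from any valid linear decoder.

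For part~(2), I would combine part~(1) with the definition of $\kp_q(\HH)$. If $\bL$ has $N$ columns and corresponds to a linear {\mic}, then by part~(1) each $\bvi + \be_{f(i)} \in \clspn(\bL)$, so $\rank_\fq(\{\bvi + \be_{f(i)}\}_{i\in[m]}) \le \dim \clspn(\bL) \le N$; taking the minimum over admissible $\{\bvi\}$ gives $\kp_q(\HH) \le N$, hence $\kp_q(\HH)$ lower-bounds the length. Conversely, choose $\{\bvi \lhd \X_i\}$ achieving the minimum in the definition of $\kp_q(\HH)$, let $k = \kp_q(\HH)$, and let $\bL$ be any $n \times k$ matrix whose column space equals $\spn(\{\bvi + \be_{f(i)}\}_{i\in[m]})$ (possible since that span has dimension $k$). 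By part~(1), $\bL$ corresponds to a linear {\mic} of length $k$, so a code of length $\kp_q(\HH)$ exists.

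The main obstacle is getting part~(1) exactly right: one must be careful that the decoder is allowed to use $\bx_{\X_i}$ in an arbitrary (not necessarily linear) way for the ``only if'' direction, and argue that nonlinear use of the side information buys nothing here — this follows because, with $\bx\bL$ and $\bx_{\X_i}$ fixed, the set of consistent message vectors $\bx$ is an affine subspace, and $x_{f(i)}$ is constant on it iff $\be_{f(i)}$ lies in the span of the constraints, which is precisely the stated condition. The bookkeeping with signs and with the distinction between row space and column space of $\bL$ is the only delicate point; everything else is a direct translation of definitions. Since the excerpt attributes this to \cite{Yossef} and \cite{DauSkachekChee2010}, I expect the paper's own proof to be short and along these lines.
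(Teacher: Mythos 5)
Your proof is correct, and there is nothing to compare it against: the paper states Lemma~\ref{lem:recovery} without proof, citing \cite{Yossef} and \cite{DauSkachekChee2010} as its source. Your argument is the standard one those references use — for part (1), if $\bvi+\be_{f(i)}=\bu\bL^T$ then $R_i$ recovers $x_{f(i)}=\bx\bL\bu^T-\bx\bvi^T$ (this is exactly the computation the paper itself performs in its syndrome-decoding remark), and conversely decodability forces $x_{f(i)}$ to be constant on the affine fibers of $\bx\mapsto(\bx\bL,\bx_{\X_i})$, i.e.\ $\be_{f(i)}\in\clspn(\bL)+\spn(\{\be_j\}_{j\in\X_i})$; for part (2), the sandwich $\kp_q(\HH)\le\rank_{\fq}(\{\bvi+\be_{f(i)}\}_{i\in[m]})\le\dim\clspn(\bL)\le N$ together with the explicit $n\times\kp_q(\HH)$ construction is exactly right, and you correctly flag (and dispose of) the one subtle point, namely that the decoder $\fkD_i$ need not be linear.
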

\vskip 10pt 

\subsection{Error-Correcting Index Code with Side Information}
\label{subsec:ic_ecc_model}

Due to noise, the symbols received by $R_i$, $i \in [m]$, may be subject to errors.  
Consider an ICSI instance $(m,n,\X,f)$, and assume that $S$ broadcasts 
a vector $\fkE(\bx) \in \fq^N$. 
Let $\bep_i \in \fq^N$ be the error affecting the information received by $R_i$, $i \in [m]$. 
Then $R_i$ actually receives the vector
\[
\by_i = \fkE(\bx) + \bep_i \in \fq^N \; ,
\]
instead of $\fkE(\bx)$. The following definition is a generalization of Definition~\ref{def:IC}. 

\medskip
\begin{definition} 
Consider an instance of the ICSI problem described by $\HH = \HH(m,n,\X,f)$.
A \emph{$\delta$-error-correcting index code} ({\dd}) over $\fq$ for this instance is an encoding function
\begin{eqnarray*}
\fkE & : & \fq^n \rightarrow \fq^N \; , 
\end{eqnarray*}
such that for each receiver $R_i$, $i \in [m]$, there exists a decoding function
\[
\fkD_i \: : \: \fq^N \times \fq^{|\X_i|} \rightarrow \fq \; , \\
\]
satisfying
\[
\forall \bx, \bep_i \in \fq^n, \; \weight(\bep_i) \le \delta \; : \; \fkD_i(\fkE(\bx) + \bep_i, \bx_{\X_i}) = x_{f(i)}\; .
\]
\end{definition}
The definitions of the length, of a linear index code, and of the matrix corresponding 
to an index code are naturally extended to an error-correcting index code. 
Note that if $\fkE$ is an {\mic}, 
then it is a $(0,\HH)$-ECIC, and vice versa. 

\vskip 10pt
\begin{definition}
An \emph{optimal} linear {\dd} over $\fq$ is 
a linear {\dd} over $\fq$ of the smallest possible length $\NX$. 
\end{definition}
\vskip 10pt   

Consider an instance of the ICSI problem described by $\HH = \HH(m,n,\X,f)$. 
We define the set of vectors
\begin{multline*}
\IX \define \\ 
\left\{ \bz \in \fqn \; : \; \exists i \in [m] \text{ such that } \bz_{\X_i} = \bO \text{ and } z_{f(i)} \neq 0 \right\}.
\end{multline*}
For all $i \in [m]$, we also define 
\[
\Y_i \define [n] \backslash \Big( \{f(i)\} \cup \X_i \Big).
\] 
Then the collection of supports of all vectors in $\IX$ is given by
\begin{equation}
\label{Jdef} 
\JX \define \bigcup_{i \in [m]} \Big \{ \{f(i)\} \cup Y_i \; : \; Y_i \subseteq \Y_i \Big\}.
\end{equation} 
The necessary and sufficient condition for a matrix $\bL$ to be the matrix corresponding to some $(\delta, \HH)$-ECIC
is given in the following lemma.  

\vskip 10pt
\begin{lemma}
\label{lem:decodability}
The matrix $\bL$ corresponds to a $(\delta, \HH)$-ECIC over $\fq$ if and only if 
\begin{equation} 
\label{2:E9}
\weight\left(\bz \bL\right) \geq 2\delta+1 \text{ for all } \bz \in \IX \; . 
\end{equation} 
Equivalently, $\bL$ corresponds to a $(\delta, \HH)$-ECIC over $\fq$ if and only if
\[
\weight\left(\sum_{i \in K}z_i\bL_i\right) \geq 2\delta + 1,
\]
for all $K \in \JX$ and for all choices of $z_i \in \fq^*$, $i \in K$. 
\end{lemma}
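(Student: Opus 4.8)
The plan is to reduce the decodability requirement for a $(\delta,\HH)$-ECIC to a minimum-distance-type condition on the vectors $\bx\bL$, mimicking the classical equivalence between minimum distance at least $2\delta+1$ and the ability to correct $\delta$ errors, but restricted to those pairs of message vectors that a given receiver $R_i$ is actually obliged to tell apart. The second (``equivalently'') formulation will then follow by rewriting $\bz\bL$ in terms of the rows $\bL_j$ indexed by $\supp(\bz)$ and using that $\JX$ is exactly the family of supports of the vectors in $\IX$.

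First I would prove the intermediate statement: $\bL$ corresponds to a $(\delta,\HH)$-ECIC over $\fq$ if and only if, for every $i\in[m]$ and every pair $\bx,\bx'\in\fqn$ with $\bx_{\X_i}=\bx'_{\X_i}$ and $x_{f(i)}\ne x'_{f(i)}$, one has $\weight\bigl((\bx-\bx')\bL\bigr)\ge 2\delta+1$. For the ``if'' direction I would define, for each $i$, the decoder $\fkD_i(\by,\bs)$ to return $x_{f(i)}$ for some $\bx$ with $\bx_{\X_i}=\bs$ and $\dist(\bx\bL,\by)\le\delta$ (choosing the output arbitrarily when no such $\bx$ exists); if two candidates $\bx,\bx'$ had $x_{f(i)}\ne x'_{f(i)}$, the triangle inequality would give $\dist(\bx\bL,\bx'\bL)\le 2\delta$, i.e.\ $\weight((\bx-\bx')\bL)\le 2\delta$, contradicting the hypothesis, so $\fkD_i$ is well defined; and on any input $(\bx\bL+\bep_i,\bx_{\X_i})$ with $\weight(\bep_i)\le\delta$ the vector $\bx$ is itself a valid candidate, so the output is $x_{f(i)}$. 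For the ``only if'' direction I would argue contrapositively: if some pair violates the bound, so $d\define\dist(\bx\bL,\bx'\bL)\le 2\delta$, split the set of coordinates where $\bx\bL$ and $\bx'\bL$ disagree into two parts of sizes $\lceil d/2\rceil\le\delta$ and $\lfloor d/2\rfloor\le\delta$, and build $\by$ agreeing with $\bx'\bL$ on the second part and with $\bx\bL$ elsewhere; then $\by=\bx\bL+\bep_i=\bx'\bL+\bep_i'$ with both error vectors of weight at most $\delta$, and since $\bx_{\X_i}=\bx'_{\X_i}$, correct decoding would force $x_{f(i)}=\fkD_i(\by,\bx_{\X_i})=x'_{f(i)}$, a contradiction.

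Next I would observe that the differences occurring above are exactly the elements of $\IX$: if $\bz=\bx-\bx'$ with $\bx_{\X_i}=\bx'_{\X_i}$ and $x_{f(i)}\ne x'_{f(i)}$ then $\bz_{\X_i}=\bO$ and $z_{f(i)}\ne 0$, so $\bz\in\IX$, and conversely any $\bz\in\IX$ comes from the pair $(\bz,\bO)$ with a suitable $i$. This turns the intermediate statement into condition~(\ref{2:E9}). For the last formulation I would use $\bz\bL=\sum_{j\in\supp(\bz)}z_j\bL_j$ together with the correspondence between $\IX$ and pairs $(K,(z_j)_{j\in K})$ with $K\in\JX$, $z_j\in\fq^*$: given $\bz\in\IX$, set $K=\supp(\bz)$, which lies in $\JX$ by~(\ref{Jdef}) since for a witnessing $i$ we have $f(i)\in K$ and $K\cap\X_i=\emptyset$, whence $K\setminus\{f(i)\}\subseteq\Y_i$, and let $z_j$ ($j\in K$) be the corresponding nonzero entries of $\bz$; conversely any such $(K,(z_j))$ yields a vector of $\IX$ supported on $K$. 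Hence the two displayed weight conditions range over literally the same collection of vectors, so they are equivalent.

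I expect the only genuinely delicate step to be the ``only if'' direction of the intermediate statement: one must exhibit a concrete received word $\by$ that is $\delta$-close to two transmitted words whose required symbols differ, and then invoke the fact that $\fkD_i$ is a single deterministic function whose only inputs are $\by$ and $\bx_{\X_i}$ (not $\bx$ itself), to derive the contradiction. The remaining steps are routine manipulations with supports and the definitions of $\IX$ and $\JX$.
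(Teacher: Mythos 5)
Your proposal is correct and follows essentially the same route as the paper: both reduce decodability to a pairwise-separation condition on message vectors that each $R_i$ must distinguish, pass to difference vectors to identify that condition with $\weight(\bz\bL)\ge 2\delta+1$ for $\bz\in\IX$, and then rewrite via supports to obtain the $\JX$ formulation. The only difference is that you spell out the decoder construction and the adversarial received word explicitly, where the paper simply asserts the ball-disjointness criterion; this is a welcome filling-in of detail, not a different argument.
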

\begin{proof}
For each $\bx \in \fq^n$, we define
\[
B(\bx,\delta) = \{\by \in \fq^N \; : \; \by = \bx \bL + \bep, \; \bep \in \fq^N , \; \weight(\bep) \leq \delta \} \; ,
\] 
the set of all vectors resulting from at most $\delta$ errors in the transmitted vector 
associated with the information vector $\bx$. 
Then the receiver $R_i$ can recover $x_{f(i)}$ correctly
if and only if 
\[
B(\bx,\delta) \cap B(\bx',\delta) = \varnothing,
\]
for every pair $\bx,\bx' \in \fqn$ satisfying:
\[
\bx_{\X_i} = \bx'_{\X_i} \text{ and } x_{f(i)} \neq x'_{f(i)} \; . 
\]
(Observe that $R_i$ is interested only in the bit $x_{f(i)}$, not in the whole vector $\bx$.)

Therefore, $\bL$ corresponds to a $(\delta, \HH)$-ECIC if and only if the following condition is satisfied: 
for all $i \in [m]$ and for all $\bx,\bx' \in \fqn$ such that
$\bx_{\X_i} = \bx'_{\X_i}$ and $x_{f(i)} \neq x'_{f(i)}$, it holds 
\begin{multline}
\forall \bep, \bep' \in \fq^N, \; \weight(\bep) \le \delta, \; \weight(\bep') \le \delta \; : \\
 \bx \bL + \bep \neq \bx' \bL + \bep' \; . 
\label{eq:unique-decode}
\end{multline}
Denote $\bz = \bx' - \bx$. Then, the condition in~(\ref{eq:unique-decode}) can be reformulated as follows: 
for all $i \in [n]$ and for all $\bz \in \fqn$ such that $\bz_{\X_i} = \bO$ and $z_{f(i)} \neq 0$, it holds
\begin{multline}
\forall \bep, \bep' \in \fq^N, \; \weight(\bep) \le \delta, \; \weight(\bep') \le \delta \; : \;
\bz \bL \neq \bep -\bep' \; . 
\label{eq:unique-decode-z}
\end{multline}
The equivalent condition is that for all $\bz \in \IX$, 
\[
\weight(\bz \bL) \ge 2 \delta + 1 \; . 
\]
Since for $\bz \in \IX$ we have
\[
\bz \bL = \sum_{i \in \supp(\bz)} z_i \bL_i,
\]
the condition (\ref{2:E9}) can be restated as
\[
\weight \left( \sum_{i \in K}z_i\bL_i \right) \geq 2 \delta + 1,
\]
for all $K \in \JX$ and for all choices of nonzero $z_i \in \fq$, $i \in K$. 
\end{proof}
\vskip 10pt

The next corollary follows from Lemma~\ref{lem:decodability} in a straight-forward manner. It is not hard to see that the conditions stated in Lemma~\ref {lem:decodability} and in the corollary below are, in fact, equivalent.

\medskip
\begin{corollary}
\label{cor:decodability}
For all $i \in [m]$, let
\[
\bM_i \define \mbox{span} \left(  \{ \bL_j \; : \; j \in \Y_i \} \right) \; . 
\]  
Then, the matrix $\bL$ corresponds to a $(\delta, \HH)$-ECIC over $\fq$ if and only if  
\begin{equation} 
\label{cor:decod}
\forall i \in [m] \; : \; \dist(\bL_{f(i)}, \bM_i) \ge 2 \delta + 1 \; .  
\end{equation} 
\end{corollary}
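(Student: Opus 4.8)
The plan is to obtain Corollary~\ref{cor:decodability} directly from the second (equivalent) formulation in Lemma~\ref{lem:decodability}, simply by reorganizing the quantifiers according to the explicit description of $\JX$ in~(\ref{Jdef}). Every step will be an ``if and only if'', so both directions of the equivalence come out at once.

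First I would recall that, by Lemma~\ref{lem:decodability}, $\bL$ corresponds to a $(\delta,\HH)$-ECIC over $\fq$ if and only if $\weight(\sum_{j \in K} z_j \bL_j) \ge 2\delta+1$ for every $K \in \JX$ and every choice of $z_j \in \fq^*$, $j \in K$. By~(\ref{Jdef}), each such $K$ is of the form $K = \{f(i)\} \cup Y_i$ for some $i \in [m]$ and some $Y_i \subseteq \Y_i$; since $f(i) \notin \X_i$, we have $f(i) \notin \Y_i$, so $f(i) \notin Y_i$ and the term $z_{f(i)}\bL_{f(i)}$ is genuinely separate from the remaining summands. Hence the ECIC condition is equivalent to: for every $i \in [m]$, every $Y_i \subseteq \Y_i$, and every assignment of nonzero coefficients $z_j$, one has $\weight\big(z_{f(i)}\bL_{f(i)} + \sum_{j \in Y_i} z_j \bL_j\big) \ge 2\delta+1$.

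Next, fixing $i$, I would use that the Hamming weight is invariant under multiplication by a nonzero scalar: dividing through by $z_{f(i)}$ turns the left-hand side into $\weight\big(\bL_{f(i)} + \sum_{j \in Y_i} c_j \bL_j\big)$ with $c_j = z_j z_{f(i)}^{-1} \in \fq^*$. The key observation is that, as $Y_i$ ranges over all subsets of $\Y_i$ and the $c_j$ over $\fq^*$, the vectors $\sum_{j \in Y_i} c_j \bL_j$ sweep out exactly $\bM_i = \spn(\{\bL_j : j \in \Y_i\})$: every element of $\bM_i$ has this form once the zero-coefficient indices are discarded (with $Y_i = \varnothing$ giving the zero vector $\bO \in \bM_i$), and conversely every such sum lies in $\bM_i$. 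Therefore the condition contributed by index $i$ reads $\min_{\bm \in \bM_i} \weight(\bL_{f(i)} + \bm) \ge 2\delta+1$. Since $\bM_i$ is a linear subspace, $\bm \mapsto -\bm$ is a bijection of $\bM_i$, so this minimum equals $\min_{\bv \in \bM_i} \weight(\bL_{f(i)} - \bv) = \dist(\bL_{f(i)}, \bM_i)$. Taking the conjunction over all $i \in [m]$ gives precisely~(\ref{cor:decod}).

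I do not expect a genuine obstacle here; the argument is essentially bookkeeping with quantifiers, which is why the statement is phrased as a corollary. The only two points needing a little care are the claim that the coefficient-restricted sums over subsets of $\Y_i$ cover all of $\bM_i$ (the zero vector included), and the harmless sign change $\bm \leftrightarrow -\bm$, available because $\bM_i$ is a subspace, which lets one pass from $\weight(\bL_{f(i)} + \bm)$ to the distance $\dist(\bL_{f(i)},\bM_i)$ as it is defined in Section~\ref{sec:preliminaries}.
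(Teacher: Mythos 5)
Your argument is correct and follows exactly the route the paper intends: the paper merely asserts that Corollary~\ref{cor:decodability} follows from Lemma~\ref{lem:decodability} in a straightforward manner, and your proposal supplies precisely that bookkeeping (decomposing $K=\{f(i)\}\cup Y_i$, normalizing by $z_{f(i)}^{-1}$, noting that the restricted sums sweep out all of $\bM_i$ including $\bO$, and using that $\bM_i=-\bM_i$ to pass to $\dist(\bL_{f(i)},\bM_i)$). No gaps; this is a complete and faithful expansion of the paper's one-line justification.
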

\medskip

The next corollary also follows directly from Lemma~\ref{lem:decodability} by considering an error-free
setup, i.e. $\delta = 0$. It is easy to verify that the conditions stated in this corollary and in Lemma~\ref{lem:recovery}
are equivalent, as expected. 
 
\vskip 10pt
\begin{corollary}
\label{coro:ic_decodability}
The matrix $\bL$ corresponds to an $\HH$-IC over $\fq$ if and only if 
\[
\weight\left(\sum_{i \in K}z_i\bL_i\right) \geq 1,
\]
for all $K \in \JX$ and for all choices of $z_i \in \fq^*$, $i \in K$, 
or, equivalently,  
\[
\forall i \in [m] \; : \; \bL_{f(i)} \notin \spn(\{\bL_j\}_{j \in \Y_i}) \; . 
\]
\end{corollary}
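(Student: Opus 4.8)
The plan is to obtain the corollary by simply setting $\delta = 0$ in Lemma~\ref{lem:decodability}. First I would recall, from the remark following the definition of a $(\delta,\HH)$-ECIC, that a $(0,\HH)$-ECIC is exactly an $\HH$-IC. Applying Lemma~\ref{lem:decodability} with $\delta = 0$ then gives: $\bL$ corresponds to an $\HH$-IC over $\fq$ if and only if $\weight\left(\sum_{i\in K}z_i\bL_i\right) \ge 2\cdot 0 + 1 = 1$ for every $K\in\JX$ and every choice of $z_i\in\fq^*$, $i\in K$. Since a vector over $\fq$ has Hamming weight at least $1$ precisely when it is not the zero vector, this condition is the same as $\sum_{i\in K}z_i\bL_i \ne \bO$ for all such $K$ and coefficients, which is the first equivalent form in the statement.

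It then remains to check that this is equivalent to $\bL_{f(i)}\notin\spn(\{\bL_j\}_{j\in\Y_i})$ for all $i\in[m]$, and I would do this by contraposition in both directions, using the explicit description $\JX=\bigcup_{i\in[m]}\{\{f(i)\}\cup Y_i : Y_i\subseteq\Y_i\}$ from~(\ref{Jdef}) together with the fact that $f(i)\notin\Y_i$ (immediate from $\Y_i=[n]\setminus(\{f(i)\}\cup\X_i)$). Suppose first that the weight condition holds but $\bL_{f(i)}\in\spn(\{\bL_j\}_{j\in\Y_i})$ for some $i$; writing this membership as $\bL_{f(i)}=\sum_{j\in Y_i}z_j\bL_j$ where $Y_i\subseteq\Y_i$ collects the indices carrying a nonzero coefficient $z_j\in\fq^*$ (with $Y_i=\varnothing$ and empty sum if $\bL_{f(i)}=\bO$), the set $K=\{f(i)\}\cup Y_i$ lies in $\JX$, and assigning coefficient $1$ to $\bL_{f(i)}$ and $-z_j$ to $\bL_j$ for $j\in Y_i$ yields a vanishing linear combination with all coefficients in $\fq^*$, contradicting the weight condition. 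Conversely, if $\sum_{i\in K}z_i\bL_i=\bO$ for some $K\in\JX$ and $z_i\in\fq^*$, then $K=\{f(i_0)\}\cup Y_{i_0}$ with $Y_{i_0}\subseteq\Y_{i_0}$, and solving $\bL_{f(i_0)}=-z_{f(i_0)}^{-1}\sum_{j\in Y_{i_0}}z_j\bL_j$ exhibits $\bL_{f(i_0)}$ inside $\spn(\{\bL_j\}_{j\in\Y_{i_0}})$.

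I do not anticipate any real obstacle, since the corollary is a direct specialization of Lemma~\ref{lem:decodability}; the only care needed is in the bookkeeping. One should note that $f(i)$ never belongs to $Y_i$, so that $\{f(i)\}\cup Y_i$ is genuinely a disjoint union and the coefficient on $\bL_{f(i)}$ stays nonzero; handle the degenerate case $\bL_{f(i)}=\bO$ via the empty subset; and use that scaling a span by the nonzero scalar $-z_{f(i_0)}^{-1}$ leaves it unchanged. For consistency with the surrounding discussion one may also verify that the span condition is equivalent to Lemma~\ref{lem:recovery}(1): a vector $\bvi\lhd\X_i$ with $\bvi+\be_{f(i)}\in\clspn(\bL)$ exists for every $i$ exactly when no row $\bL_{f(i)}$ lies in the span of the rows indexed by $\Y_i$, which is the ``as expected'' remark preceding the corollary.
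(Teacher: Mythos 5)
Your proposal is correct and follows exactly the paper's route: the paper also obtains this corollary by specializing Lemma~\ref{lem:decodability} to $\delta=0$ (using that a $(0,\HH)$-ECIC is precisely an $\HH$-IC) and leaves the equivalence of the two stated conditions as an easy verification. Your contraposition argument for that equivalence, including the care with $f(i)\notin\Y_i$ and the degenerate case $\bL_{f(i)}=\bO$, correctly fills in the details the paper omits.
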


\medskip
\begin{example}
\label{example:1}
Let $q = 2$, $m = n=3$, and $f(i) = i$ for $i \in [3]$. 
Suppose $\X_1 = \{2,3\}$, $\X_2 = \{1,3\}$, and $\X_3 = \{1,2\}$. Let 
\[
\bL = \begin{pmatrix} 1 & 1 & 1 & 0 \\ 1 & 1 & 0 & 1\\ 1 & 0 & 1 & 1\end{pmatrix}. 
\]
Note that $\bL$ generates a $[4,3,1]_2$ code, which has minimum distance one. However, 
the index code based on $\bL$ can still correct one error. Indeed, let $\HH = \HH(3,3,\X,f)$, we have
\[
\I(2,\HH) = \{100,010,001\}. 
\]
Since each row of $\bL$ has weight at least three, it follows that 
$\weight(\bz \bL) \geq 3$ for all $\bz \in \I(2,\HH)$. By Lemma~\ref {lem:decodability}, 
$\bL$ corresponds to a $(1,\HH)$-ECIC over $\ff_2$. 

In fact, for this instance, even a simpler index code of length three, based on 
\[
\bL' = \begin{pmatrix} 1 & 1 & 1 \\ 1 & 1 & 1\\ 1 & 1 & 1\end{pmatrix} \; , 
\]
is a $(1,\HH)$-ECIC over $\ff_2$. 
\end{example} 
\vskip 10pt 

\begin{example}
Assume that $m = n$ and $f(i) = i$ for all $i \in [m]$. Furthermore, suppose that $\X_i = \varnothing$ 
for all $i \in [m]$ (i.e. there is no side information available to the receivers). Let $\HH = \HH(m,n,\X,f)$.
Then, $\IX = \fqn \backslash \{ \bO \}$.
Hence, by Lemma~\ref {lem:decodability}, the $n \times N$ matrix $\bL$ corresponding to a {\dd} over $\fq$ 
(for some integer $\delta \ge 0$) is a generating matrix of an $[N,n, \ge 2 \delta+1]_q$ linear code. 
Thus, under these settings, the problem of designing an optimal ECIC is reduced to the problem of constructing an 
optimal classical linear error-correcting code.
\end{example}
\medskip 

Observe however, that for general $\X$, changing the order of rows in $\bL$ can lead to ECIC's with different error-correcting capabilities. Therefore, the problem of designing an optimal linear ECIC is essentially the problem of finding the matrix $\bL$ corresponding to that code. However, the minimum distance 
of the code generated by the rows of $\bL$ is not necessary a valid indicator for goodness of an ECIC. Sometimes, as 
Example~\ref{example:1} shows, matrix $\bL$ with redundant rows yields a good ECIC.  
  
\section{The $\al$-Bound and the $\kp$-Bound}
\label{subsec:bounds}

Let $(m,n,\X,f)$ be an instance of the ICSI problem, and let $\HH$ be the corresponding side information hypergraph. 
Next, we introduce the following definitions for the hypergraph $\HH$. 

\vskip 10pt 
\begin{definition}
A subset $H$ of $[n]$ is called a \emph{generalized independent set}
in $\HH$ if every nonempty subset $K$ of $H$ belongs to $\JX$. 
\end{definition}
\vskip 10pt 

\begin{definition}
A generalized independent set of the largest size in $\HH$ is called a \emph{maximum generalized independent set}.
	The size of a maximum generalized independent set in $\HH$ is called the \emph{generalized independence number}, 
	and denoted by $\al(\HH)$. 
\end{definition}
\vskip 10pt 

When $m = n$ and $f(i) = i$ for all $i \in [n]$, the
generalized independence number of $\HH$ is equal to the maximum size of an acyclic induced subgraph of $\G_\HH$, which was introduced in~\cite{Yossef}. In particular, when $\G_\HH$ is symmetric, $\al(\HH)$ is the 
independence number of $\G_\HH$. We prove the latter statement in the Appendix.   

Next, we present a lower bound on the length of a $(\delta, \HH)$-ECIC. We call this bound \emph{$\al$-bound}.

\vskip 10pt
\begin{theorem}[$\al$-bound]
\label{thm:lowerbound} 
The length of an optimal linear $(\delta, \HH)$-ECIC over $\fq$ satisfies 
\[
\NX \ge \Na \; .
\]
Moreover, the equality is attained if there exists an $n \times \aG$ matrix $\bB = (b_{i,j})$
over $\fq$ satisfying the following condition:
for all $K \in \JX$ and for all choices of $z_i \in \fq^*$, $i \in K$, there always exists some $j$ such that
\[
	\sum_{i \in K} z_i \, b_{i,j} \neq 0 \; .
\]
\end{theorem}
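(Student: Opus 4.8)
The plan is to prove the lower bound $\NX \ge \Na$ by extracting, from any matrix $\bL$ corresponding to a $(\delta,\HH)$-ECIC, a submatrix that generates a classical linear code of dimension $\aG$ and minimum distance at least $2\delta+1$. First I would let $\bL$ be an $n\times N$ matrix corresponding to a $(\delta,\HH)$-ECIC over $\fq$, and let $H\subseteq[n]$ be a maximum generalized independent set, so $|H|=\aG$. Consider the $\aG\times N$ submatrix $\bL_H$. By the definition of a generalized independent set, every nonempty subset $K$ of $H$ lies in $\JX$; hence by the second characterization in Lemma~\ref{lem:decodability}, for every such $K$ and every choice of $z_i\in\fq^*$, $i\in K$, we have $\weight\left(\sum_{i\in K} z_i \bL_i\right)\ge 2\delta+1$. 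In particular, taking $K=\supp(\bw)$ for an arbitrary nonzero $\bw\in\fq^{\aG}$ (indexing coordinates by $H$), this says $\weight(\bw\,\bL_H)\ge 2\delta+1$. So the rows of $\bL_H$ are linearly independent (a nontrivial linear combination has positive weight), and $\bL_H$ generates an $[N,\aG,\ge 2\delta+1]_q$ code. By minimality of $N_q[\cdot,\cdot]$, we get $N\ge N_q[\aG,2\delta+1]=\Na$, and since this holds for every such $\bL$, it holds for the optimal one.

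For the equality condition, I would proceed by constructing an explicit $(\delta,\HH)$-ECIC of length exactly $\Na$ under the stated hypothesis. Start with a generator matrix $\bG$ of an optimal $[\Na,\aG,2\delta+1]_q$ code, which has $\aG$ rows. The idea is to build an $n\times\Na$ matrix $\bL$ whose rows are suitable $\fq$-linear combinations of the rows of $\bG$, using the matrix $\bB=(b_{i,j})$ as the coefficient table: set $\bL = \bB\,\bG$, i.e.\ $\bL_i = \sum_{j=1}^{\aG} b_{i,j}\,\bG_j$ for each $i\in[n]$. Now take any $K\in\JX$ and any $z_i\in\fq^*$, $i\in K$; then
\[
\sum_{i\in K} z_i \bL_i \;=\; \sum_{i\in K} z_i \sum_{j=1}^{\aG} b_{i,j}\,\bG_j \;=\; \sum_{j=1}^{\aG}\Big(\sum_{i\in K} z_i\, b_{i,j}\Big)\bG_j .
\]
The hypothesis on $\bB$ guarantees that the coefficient vector $\big(\sum_{i\in K} z_i b_{i,j}\big)_{j\in[\aG]}$ is nonzero, so the displayed expression is a nonzero codeword of the $[\Na,\aG,2\delta+1]_q$ code and therefore has weight at least $2\delta+1$. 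By Lemma~\ref{lem:decodability}, $\bL$ corresponds to a $(\delta,\HH)$-ECIC of length $\Na$, which combined with the lower bound gives $\NX=\Na$.

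The routine part is the two weight computations; the step requiring the most care is verifying that the condition on $\bB$ is exactly the right one to push through the encoding $\bL=\bB\bG$ — namely that the map $\bw\mapsto\bw^{T}\bB$ (or, per-combination, $K,(z_i)\mapsto (\sum_i z_i b_{i,j})_j$) never kills a "syndrome" coming from an element of $\JX$. I would also double-check the edge cases: that $\JX$ indeed contains every nonempty subset of a generalized independent set (so the restriction to $K\subseteq H$ in the lower bound is legitimate), and that $\bB$ having full enough rank is implied rather than separately assumed (it is: the condition applied to singletons $K=\{i\}$ forces each row of $\bB$ to be nonzero, and applied to larger $K$ it is stronger than mere linear independence of rows). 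No obstacle is expected beyond bookkeeping; the main conceptual point is simply the correspondence between generalized independent sets and row-subsets of $\bL$ that must behave like generator matrices of good classical codes.
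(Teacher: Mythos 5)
Your proposal is correct and follows essentially the same route as the paper: the lower bound by restricting $\bL$ to a maximum generalized independent set and observing that the resulting rows generate an $[N,\aG,\ge 2\delta+1]_q$ code, and the equality part by setting $\bL=\bB\bG$ for a generator matrix $\bG$ of an optimal $[\Na,\aG,2\delta+1]_q$ code and using the hypothesis on $\bB$ to ensure each relevant combination is a nonzero codeword. No gaps.
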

\begin{proof}
Consider an $n \times N$ matrix $\bL$, which corresponds to a $(\delta, \HH)$-ECIC.
Let $H = \{i_1,i_2,\ldots,i_\aG\}$ be a maximum generalized independent set in $\HH$. 
Then, every subset $K \seq H$ satisfies $K \in \JX$. Therefore, 
\[
\weight\left( \sum_{i \in K} z_i \bL_i \right) \geq 2\delta + 1
\]
for all $K \subseteq H$, $K \neq \varnothing$, and for all choices of $z_i \in \fq^*$, $i \in K$. 
Hence, the $\aG$ rows of $\bL$, namely $\bL_{i_1},\bL_{i_2},\ldots,\bL_{i_\aG}$, form
a generator matrix of an $[N,\aG,2\delta+1]_q$ code. Therefore,
\[
N \geq \Na \; .
\]

Next, we assume the existence of a matrix $\bB$ satisfying the properties stated in the theorem. 
Let $\bL'$ be a generator matrix of some $[N',\aG,2\delta+1]_q$ code, where $N' = \Na$. 
We construct the $n \times N'$ matrix $\bL$ as follows. For $i \in [n]$, let
\[
\bL_i = \sum_{j = 1}^\aG b_{i,j}\bL'_j \; .
\] 
For every $K \in \JX$ and for all choices of $z_i \in \fq^*$, $i \in K$, we have
\[
\begin{split}
\weight \left( \sum_{i \in K} z_i \, \bL_i \right)  & 
= \weight \left( \sum_{i \in K} z_i \sum_{j = 1}^\aG b_{i,j} \bL'_j \right) \\
& = \weight \left( \sum_{j = 1}^\aG \left( \sum_{i \in K} z_i \, b_{i,j} \right) \bL'_j \right)\\
& \geq 2\delta + 1 \; ,
\end{split} 
\]
where the last transition is due to the existence of $j \in [\aG]$ such that 
\[
\sum_{i \in K} z_i \, b_{i,j} \neq 0 \; ,
\]
and the fact that $\bL'_j$'s are linearly independent nonzero codewords of a code of minimum distance $2\delta + 1$. 

We conclude that the index code based on $\bL$ is capable of correcting $\delta$ errors. Therefore, $\NX = \Na$. 
\end{proof}
\vskip 10pt

\begin{example}
\label{example:2}
Let $q = 2$, $m = n = 5$, $f(i) = i$ for all $i \in [m]$, and $\delta = 2$. Assume 
\begin{eqnarray*} 
\X_1 = \{2,3,4\} \; , \quad \X_2 = \{3,4,5\} \; , \quad \X_3 = \{4,5,1\} \; , \\
\X_4 = \{5,1,2\} \; , \quad \X_5 = \{1,2,3\} \; .
\end{eqnarray*} 
Let $\HH = \HH(5,5,\X,f)$. Then 
\[
\begin{split}
\JX \; = \; & \Big\{ \{1\},\{1,5\},\{2\},\{2,1\},\{3\},\\
&\{3,2\},\{4\},\{4,3\},\{5\},\{5,4\}\Big\}.
\end{split}
\] 
It is easy to check that $\aG = 2$. Therefore, Theorem~\ref{thm:lowerbound} implies that
\[
\N_2[\HH,2] \geq N_2[2,5] = 8 \; . 
\]
The last equality can be verified by~\cite{Grassl}.
 
On the other hand, take the matrix 
\[
\bB \define
\begin{pmatrix}
1 & 0 \\ 
0 & 1 \\
1 & 0 \\
0 & 1 \\
1 & 1 \\
\end{pmatrix} \; . 
\]
The matrix $\bB$ satisfies the property that for all $K \in \JX$, $K \neq \varnothing$, 
there exists $j \in [2]$ such that
\[
\sum_{i \in K} b_{i,j} \neq 0 \; . 
\]
From Theorem~\ref{thm:lowerbound}, we have $\N_2[\HH,2] = N_2[2,5] = 8$.
\end{example}
\vskip 10pt 

\begin{remark}
In~\cite{Yossef}, when $m = n$ and $f(i) = i$ for all $i \in [n]$, 
$\aG$ was shown to be a lower bound on the length of a (non-error-correcting) 
linear index code. However, the $\al$-bound in Theorem~\ref{thm:lowerbound} does not follow from 
the results in~\cite{Yossef}. The reason is that a concatenation of an optimal linear error-correcting 
code with an optimal non-error-correcting index code might fail to produce an optimal linear ECIC.  
This is illustrated later in Example~\ref{ex:1}. 
\end{remark}
\vskip 10pt 

The following proposition is based on the fact that concatenation of 
a $\delta$-error-correcting code with an optimal (non-error-correcting) $\HH$-IC yields a $(\delta, \HH)$-ECIC. 

\medskip
\begin{proposition}[$\kp$-bound]
\label{pro:upperbound}
The length of an optimal $(\delta, \HH)$-ECIC over $\fq$ satisfies
\[
\NX \leq \Nk \; .
\]
\end{proposition}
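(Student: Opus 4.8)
The plan is to prove the $\kp$-bound by an explicit concatenation construction. I would start from an optimal linear (non-error-correcting) $\HH$-IC. By part~2 of Lemma~\ref{lem:recovery}, there exists an $n \times \kG$ matrix $\bP$ over $\fq$ corresponding to such a code, where $\kG = \kp_q(\HH)$. Next, let $\bG'$ be a generator matrix of an optimal linear $[\Nk, \kG, 2\delta+1]_q$ code $\cC'$; by definition of $N_q[\,\cdot\,,\cdot\,]$, such a code exists and has length exactly $\Nk$. Form the product $\bL \define \bP \bG'$, an $n \times \Nk$ matrix. The claim is that $\bL$ corresponds to a $(\delta,\HH)$-ECIC, which by the definition of $\NX$ immediately gives $\NX \le \Nk$.

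To verify that $\bL$ works, I would invoke Lemma~\ref{lem:decodability}: it suffices to show $\weight(\bz\bL) \ge 2\delta+1$ for every $\bz \in \IX$. Fix such a $\bz$. By the characterization in part~1 of Lemma~\ref{lem:recovery} together with Corollary~\ref{coro:ic_decodability} (the $\delta=0$ case), the matrix $\bP$ satisfies $\weight(\bz\bP) \ge 1$ for every $\bz \in \IX$ — equivalently, $\bz\bP \neq \bO$, since $\IX$ is precisely the set of nonzero difference vectors that a receiver must distinguish and the non-error-correcting index code already distinguishes them. Hence $\bw \define \bz\bP$ is a nonzero vector in $\fq^{\kG}$. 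Then $\bz\bL = \bz\bP\bG' = \bw\bG'$ is a nonzero codeword of $\cC'$, because $\bG'$ has full row rank $\kG$ and maps nonzero messages to nonzero codewords. Since $\cC'$ has minimum distance $2\delta+1$, we get $\weight(\bz\bL) = \weight(\bw\bG') \ge 2\delta+1$, as required.

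The only point that needs a little care is confirming that the non-error-correcting index code indeed guarantees $\bz\bP \neq \bO$ for all $\bz \in \IX$: this is exactly Corollary~\ref{coro:ic_decodability}, which states that $\bP$ corresponds to an $\HH$-IC if and only if $\weight(\sum_{i\in K} z_i \bP_i) \ge 1$ for all $K \in \JX$ and all nonzero $z_i$, and the supports of vectors in $\IX$ are precisely the sets $K \in \JX$ (as recorded in the definition~(\ref{Jdef}) of $\JX$ and used in the proof of Lemma~\ref{lem:decodability}). So $\bz\bP$ is a nonzero $\fq$-linear combination of rows of $\bP$ indexed by $\supp(\bz) = K \in \JX$, hence nonzero. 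I do not anticipate a real obstacle here; the argument is a routine chaining of the two lemmas, and the main thing to get right is simply lining up the two matrices so that dimensions match ($\bP$ is $n \times \kG$, $\bG'$ is $\kG \times \Nk$).
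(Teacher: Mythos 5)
Your proposal is correct and uses the same construction as the paper: concatenating an optimal non-error-correcting $\HH$-IC (an $n \times \kG$ matrix) with a generator matrix of an optimal $[\Nk,\kG,2\delta+1]_q$ code. The only difference is in the verification: the paper argues operationally (each receiver first decodes the outer code $\cC'$ to recover $\by = \bx\bG$ and then applies the index-code decoder), whereas you check the weight criterion of Lemma~\ref{lem:decodability} directly by noting that $\bz\bP \neq \bO$ for $\bz \in \IX$ forces $\bz\bL = (\bz\bP)\bG'$ to be a nonzero codeword of $\cC'$; both verifications are valid, and yours is arguably the more explicit of the two.
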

\begin{proof}
Let $\bG$, which is an $n \times \kG$ matrix, correspond to an optimal {\mic} over $\fq$. 
Denote 
\[
\by = \bx \bG \in \fq^\kG.
\] 
Let $\bM$ be a generator matrix of an optimal $[N,\kG,2\delta+1]_q$ code $\cC'$, where
\[
N = \Nk.
\] 
Consider a scheme where $S$ broadcasts the vector $\by \bM \in \fq^N$. If less than $\delta$ 
errors occur, then each receiver $R_i$ is able to recover $\by$ by using $\cC'$. Hence each $R_i$ is 
able to recover $x_{f(i)}$. Therefore, for the index code based on $\bL$, 
\[
\bL = \bG \bM \; , 
\]
each receiver $R_i$ is capable to recover $x_{f(i)}$ if the number of errors is less or equal to $\delta$. 
The length of the corresponding ECIC is 
$N = \Nk$. Therefore, 
\[
\NX \leq \Nk \; .
\]
\end{proof}
\vskip 10pt

By combining the results in Theorem~\ref{thm:lowerbound} and in Proposition~\ref{pro:upperbound}, we obtain the
following corollary. 

\vskip 10pt 
\begin{corollary}
\label{coro:sandwiched}
The length of an optimal linear $(\delta, \HH)$-ECIC over $\fq$ satisfies
\[
\Na \leq \NX \leq \Nk \; .
\] 
\end{corollary}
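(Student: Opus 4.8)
The statement is the conjunction of the two results just proved, so the plan is simply to chain them. First I would invoke Theorem~\ref{thm:lowerbound} (the $\al$-bound), which already establishes the left inequality $\Na \le \NX$: for any $n\times N$ matrix $\bL$ corresponding to a $(\delta,\HH)$-ECIC, restricting $\bL$ to the rows indexed by a maximum generalized independent set yields a generator matrix of an $[N,\aG,2\delta+1]_q$ code, forcing $N \ge \Na$. Nothing new is needed here.

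Next I would invoke Proposition~\ref{pro:upperbound} (the $\kp$-bound), which already establishes the right inequality $\NX \le \Nk$: concatenating an optimal linear $\HH$-IC (of length $\kG$, guaranteed by Lemma~\ref{lem:recovery}) with an optimal $[\,\Nk,\kG,2\delta+1\,]_q$ classical code produces a $(\delta,\HH)$-ECIC of length $\Nk$, so the optimum cannot be longer. Again, this is verbatim the content of that proposition.

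Combining the two displayed inequalities gives $\Na \le \NX \le \Nk$, which is the claim. The ``proof'' is therefore a one-line citation of Theorem~\ref{thm:lowerbound} and Proposition~\ref{pro:upperbound}; there is no genuine obstacle. The only conceptual remark worth making — though it belongs in a remark rather than the proof — is that these two bounds need not coincide, since in general $\aG \le \kG$ with the inequality strict (the pentagon instance with $\delta=2$, treated later, separates them), so Corollary~\ref{coro:sandwiched} does not by itself pin down $\NX$.

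\begin{proof}
Immediate from Theorem~\ref{thm:lowerbound} and Proposition~\ref{pro:upperbound}.
\end{proof}
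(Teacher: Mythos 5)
Your proposal is correct and matches the paper exactly: the paper states this corollary as an immediate combination of Theorem~\ref{thm:lowerbound} and Proposition~\ref{pro:upperbound}, with no further argument. Nothing more is needed.
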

\vskip 10pt

It is shown in the example below that the inequalities in Corollary~\ref{coro:sandwiched}
can be strict. In particular, it follows that mere application of an error-correcting code on top of an 
index code may fail to provide us with an optimal linear ECIC. This fact  
motivates the study of ECIC's in Sections~\ref{sec:ic_ecc}--\ref{subsec:decoding}.    

\vskip 10pt 
\begin{example}
\label{ex:1}
Let $q = 2$, $m = n = 5$, $\delta = 2$, and $f(i) = i$ for all $i \in [m]$. Assume
\begin{eqnarray*} 
\X_1 = \{2,5\} \; , \quad 
\X_2 = \{1,3\} \; , \quad 
\X_3 = \{2,4\} \; , \\
\X_4 = \{3,5\} \; , \quad
\X_5 = \{1,4\} \; . 
\end{eqnarray*}
Let $\HH = \HH(5,5,\X,f)$. Then we have
\[
\begin{split}
\JX=\Big\{&\{1\},\{1,3\},\{1,4\},\{1,3,4\},\\
&\{2\},\{2,4\},\{2,5\},\{2,4,5\},\\
&\{3\},\{1,3\},\{3,5\},\{1,3,5\},\\
&\{4\},\{1,4\},\{2,4\},\{1,2,4\},\\
&\{5\},\{2,5\},\{3,5\},\{2,3,5\}
\Big\}.
\end{split}
\]
The side information graph $\G_\HH$ of this instance is a pentagon. It is easy to verify that $\aG = \al(\G_\HH) = 2$. 
It follows from Theorem~9 in~\cite{Yossef-journal} that $\kp_2(\HH) = \mrt(\G_\HH) =~3$. 
Thus, from~\cite{Grassl} we have 
\[
N_2[2,5] = 8 \quad \mbox{ and } \quad N_2[3,5] = 10 \; . 
\]
Due to Corollary~\ref{coro:sandwiched}, we have
\[
8 \leq \N_2[\HH,2] \leq 10. 
\]
Using a computer search, we obtain that $\N_2[\HH,2] = 9$, and the corresponding optimal scheme is based on
\[
\bL = \begin{pmatrix} 
1  &1 & 1  &1 & 1&  0&  0 & 0  &0  \\
0  &1 & 0  &1  &1  &0  &1 & 1 & 0  \\
1  &1  &0  &0  &0  &1  &1  &1  &0  \\
0  &1  &1  &0  &0  &1  &0  &1  &1  \\
1  &0  &1  &0  &1  &0  &0  &1  &1  \\
\end{pmatrix} \; . 
\]
It is technical to verify that for all $K \in \JX$,  
\[
\weight \left( \sum_{i \in K} \bL_i \right) \geq 5 \; .
\]
Therefore by Lemma~\ref {lem:decodability}, for the index code based on $\bL$, each receiver $R_i$ is 
able to recover $x_i$, if the number of errors is less than or equal to $2$. 
Observe that the length of the ECIC corresponding to $\bL$ lies strictly between the $\al$-bound and the $\kp$-bound. 
\end{example}
\vskip 10pt 

When the graph $\G$ is undirected (or symmetric), 
the following theorem holds (see, for instance,~\cite{Haemers1978}).

\vskip 10pt 
\begin{theorem}
Let $\chi(\bar{\G)}$ denote the chromatic number of the complement of the graph $\G$. Then,
\[
\al(\G) \leq \mrq(\G) \leq \chi(\bar{\G)} \; . 
\]
\end{theorem}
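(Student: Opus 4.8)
The plan is to prove the chain of inequalities $\al(\G) \leq \mrq(\G) \leq \chi(\bar{\G})$ by establishing each inequality separately, working in the special case where $\G = \G_\HH$ is symmetric (equivalently undirected), so that $\al(\G)$ is the ordinary independence number and $\mrq(\G) = \kp_q(\HH)$ is the min-rank of the side information graph over $\fq$.

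For the left inequality $\al(\G) \leq \mrq(\G)$, I would invoke Corollary~\ref{coro:sandwiched} (or rather Theorem~\ref{thm:lowerbound} and Proposition~\ref{pro:upperbound}) in the error-free case $\delta = 0$. Indeed, with $\delta = 0$ we have $N_q[k,1] = k$ for every $k$, so Corollary~\ref{coro:sandwiched} collapses to $\aG \leq \NX \leq \kG$, and by Lemma~\ref{lem:recovery}(2) the middle quantity $\N_q[\HH,0]$ equals $\kp_q(\HH) = \mrq(\G)$ in the symmetric case. Hence $\al(\G) \leq \mrq(\G)$ follows immediately. Alternatively, and more self-containedly, one observes directly that if $H$ is a maximum independent set then for $i, j \in H$ with $i \neq j$ we have $\{i,j\} \notin \E$, so $\bvi \lhd \X_i$ forces the $\be_j$-coordinate of $\bvi$ to vanish; thus the vectors $\{\bvi + \be_i\}_{i \in H}$ have a triangular structure with nonzero diagonal entries and are therefore linearly independent, giving $\rank_\fq(\{\bvi + \be_i\}_{i \in [n]}) \geq |H| = \al(\G)$.

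For the right inequality $\mrq(\G) \leq \chi(\bar{\G})$, I would start from an optimal coloring $\psi : \V \to [\chi]$ of $\bar{\G}$ using $\chi = \chi(\bar{\G})$ colors. Each color class $C_t = \psi^{-1}(t)$ is an independent set in $\bar{\G}$, hence a clique in $\G$; equivalently, any two distinct vertices in $C_t$ are adjacent in $\G$. For each vertex $i \in \V$ with $\psi(i) = t$, I would set $\bvi$ to be the $(-1)$ times the sum of the unit vectors $\be_j$ over $j \in C_t \setminus \{i\}$; since every such $j$ is a neighbor of $i$ in $\G$, i.e. $j \in \X_i$, this is a valid choice with $\bvi \lhd \X_i$. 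Then $\bvi + \be_i = -\sum_{j \in C_t, j \neq i} \be_j + \be_i$, and up to sign all vertices in a common color class produce vectors lying in the single coordinate subspace $\spn(\{\be_j : j \in C_t\})$ — in fact each such vector is (up to sign) the same "all-ones on $C_t$" vector. So the $n$ vectors $\{\bvi + \be_i\}_{i \in \V}$ span a space of dimension at most the number of color classes, namely $\chi(\bar{\G})$, which gives $\mrq(\G) \leq \chi(\bar{\G})$.

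**The main obstacle** is bookkeeping the sign conventions and the exact definition of $\X_i$ in the symmetric/undirected case so that the constructed vectors genuinely satisfy $\bvi \lhd \X_i$; this hinges on the correspondence~(\ref{eq:correspond}) between $\G$ and the side information hypergraph, under which $j \in \X_i$ iff $(i,j) \in \E$ iff $\{i,j\}$ is an edge of the undirected graph. Once that dictionary is pinned down, both inequalities are short. Since the paper attributes this result to Haemers~\cite{Haemers1978} and flags it as classical ("see, for instance"), I expect the author's proof to be brief — likely just citing~\cite{Haemers1978} or giving the two-line clique-cover argument above — rather than a detailed derivation.
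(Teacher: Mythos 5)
The paper itself gives no proof of this theorem --- it is stated as a known fact with a pointer to Haemers --- so the only question is whether your reconstruction is sound. Your argument for the left inequality $\al(\G)\le\mrq(\G)$ is correct: for any admissible choice of the $\bvi$, the vectors $\{\bvi+\be_i\}_{i\in H}$ restricted to the coordinates of an independent set $H$ form the identity matrix (since $\supp(\bvi)\subseteq\X_i$ and $\X_i\cap H=\varnothing$), hence are linearly independent, and the bound follows by taking the minimum over all choices.

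The right inequality, however, contains a genuine error. You set $\bvi=-\sum_{j\in C_t\setminus\{i\}}\be_j$ and claim that $\bvi+\be_i=\be_i-\sum_{j\in C_t,\,j\ne i}\be_j$ is, up to sign, the all-ones vector on $C_t$ for every $i$ in the class. That is true only in characteristic $2$; over a general $\fq$ this vector equals $2\be_i-\sum_{j\in C_t}\be_j$, which genuinely depends on $i$ and is not a scalar multiple of a fixed vector. Concretely, take $\G=K_3$ over $\ff_3$, so $\bar{\G}$ is edgeless, $\chi(\bar{\G})=1$, and the single color class is $C_1=\{1,2,3\}$: your construction produces the rows $(1,-1,-1)$, $(-1,1,-1)$, $(-1,-1,1)$, whose determinant is $-4\equiv 2\not\equiv 0 \pmod 3$, so they have rank $3$ and certify only $\mrq(\G)\le 3$ rather than $\mrq(\G)\le 1$. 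The repair is a one-character change: take $\bvi=+\sum_{j\in C_t\setminus\{i\}}\be_j$, so that $\bvi+\be_i=\sum_{j\in C_t}\be_j$ is literally the same vector for every $i$ in the class; the whole collection then consists of at most $\chi(\bar{\G})$ distinct vectors and the bound follows. With that fix (and your correctly stated dictionary $j\in\X_i\Leftrightarrow(i,j)\in\E$), the proof is complete.
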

\medskip

When $m = n$ and $f(i) = i$ for all $i \in [m]$, we have that $\al(\HH) = \al(\G_\HH)$ and $\kG = \mrq(\G_\HH)$. Moreover, if the graph $\G_\HH$ is symmetric and satisfies $\al(\G_\HH) = \chi(\bar{\G}_\HH)$,
then from Corollary~\ref{coro:sandwiched} we have
\[
\NX = \Na = \Nk \; ,
\]
for all $q$, and the corresponding bounds in Corollary~\ref{coro:sandwiched} are tight. 
\medskip

\begin{definition}
An undirected (or symmetric) graph $\G$ is called perfect if for every induced subgraph $\G'$ of 
$\G$, $\al(\G') = \chi(\bar{\G'})$. 
\end{definition}
\medskip

Perfect graphs include families of graphs such as trees, bipartite graphs, interval graphs, 
and chordal graphs. If $m = n$, $f(i) = i$ for all $i \in [m]$, and $\G_\HH$ is perfect, then
the bounds in Corollary~\ref{coro:sandwiched} are tight.
For the full characterization of perfect graphs, the reader can refer to~\cite{Chudnovsky}. 
 
\section{The Singleton Bound}
\label{sec:singleton}

The following bound is analogous to Singleton bound for classical linear error-correcting codes. 
\medskip
\begin{theorem}[Singleton bound]
\label{thrm:singleton}
The length of an optimal  linear {\dd} over $\fq$
satisfies
\[
\NX \geq \kG + 2 \delta \; .
\]
\end{theorem}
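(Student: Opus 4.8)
The plan is to reduce the Singleton bound for ECIC's to the classical Singleton bound for linear codes by extracting from the matrix $\bL$ a submatrix that generates a code of dimension $\kG$ and minimum distance at least $2\delta+1$. First I would take $\bL$ to be an $n \times N$ matrix corresponding to an optimal linear $(\delta,\HH)$-ECIC, so that $N = \NX$. By Lemma~\ref{lem:recovery}, since a $(\delta,\HH)$-ECIC is in particular an $\HH$-IC, there exist vectors $\bvi \lhd \X_i$ with $\bvi + \be_{f(i)} \in \clspn(\bL)$ for each $i \in [m]$; the min-rank definition then gives $\rank_\fq(\{\bvi + \be_{f(i)}\}_{i\in[m]}) \ge \kG$ for the optimal choice, but more importantly these $n$ vectors live in $\clspn(\bL)$, so $\rank(\bL) \ge \kG$ (one can choose the $\bvi$'s so the rank is exactly $\kG$, but we only need the lower bound $\rank(\bL)\ge\kG$, which holds for \emph{any} matrix corresponding to an $\HH$-IC). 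Hence $\bL$ has at least $\kG$ linearly independent columns.

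Next I would puncture: let $\cC = \{\bz\bL : \bz \in \fq^n\} = \clspn(\bL^T)$ be the code generated by the \emph{rows} of $\bL$; it has dimension $\rank(\bL) \ge \kG$ and length $N$. The condition~(\ref{2:E9}) in Lemma~\ref{lem:decodability} says $\weight(\bz\bL) \ge 2\delta+1$ for all $\bz \in \IX$, but this does \emph{not} say $\cC$ has minimum distance $2\delta+1$ (indeed Example~\ref{example:1} shows $\cC$ can have minimum distance $1$). So the naive approach fails, and the real content is to find the right sub-object. The fix is to use Lemma~\ref{lem:recovery} more carefully: for each $i$, $\be_{f(i)} = (\bvi + \be_{f(i)}) - \bvi$, and both $\bvi+\be_{f(i)} \in \clspn(\bL)$ and (crucially) the support condition $\bvi \lhd \X_i$. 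I would instead argue on the \emph{column} side. Since every singleton $\{f(i)\}$ and more generally many sets lie in $\JX$, the rows $\bL_{f(i)}$ satisfy $\weight(\bL_{f(i)}) \ge 2\delta+1$, i.e. each such row is nonzero in at least $2\delta+1$ coordinates. The standard Singleton-type argument then deletes any $2\delta$ columns of $\bL$: after deleting $2\delta$ columns we get an $n\times(N-2\delta)$ matrix $\bL'$ whose corresponding map is still an $\HH$-IC (because for each $i$, $\bL_{f(i)}$ had $\ge 2\delta+1$ nonzeros, so $\bL'_{f(i)} \ne \bO$, and more generally every $\weight(\bz\bL) \ge 2\delta+1 > 2\delta$ forces $\bz\bL' \ne \bO$ for $\bz\in\IX$, so Corollary~\ref{coro:ic_decodability} applies). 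Therefore $\bL'$ corresponds to an $\HH$-IC of length $N - 2\delta$, and by Lemma~\ref{lem:recovery}(2) we get $N - 2\delta \ge \kG$, i.e. $\NX = N \ge \kG + 2\delta$.

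The main obstacle I anticipate is the step showing $\bL'$ (the punctured matrix) still corresponds to an $\HH$-IC — one must verify via Corollary~\ref{coro:ic_decodability} that $\bL'_{f(i)} \notin \spn(\{\bL'_j\}_{j\in\Y_i})$ for all $i$, and this should follow because the full condition $\weight(\sum_{i\in K} z_i \bL_i) \ge 2\delta+1$ for all $K \in \JX$ (in particular for $K = \{f(i)\}\cup Y_i$) survives deletion of $2\delta$ coordinates, leaving weight $\ge 1$, which is exactly the $\HH$-IC condition for $\bL'$. Once that is in place, invoking the optimality of $\kG$ from Lemma~\ref{lem:recovery}(2) closes the argument immediately. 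A subtlety worth double-checking: the puncturing must be done so that the resulting matrix genuinely has the smaller length $N-2\delta$ available, which requires $N \ge 2\delta$; if $N < 2\delta$ the bound $\NX \ge \kG + 2\delta$ would need a separate (trivial) argument, but since $\kG \ge 1$ and any ECIC correcting $\delta$ errors trivially needs length $\ge 2\delta+1$ at least whenever some receiver is present, this degenerate case causes no trouble.
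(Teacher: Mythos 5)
Your proof is correct and follows essentially the same route as the paper: delete any $2\delta$ columns of $\bL$, observe that the weight condition $\weight(\bz\bL)\ge 2\delta+1$ for $\bz\in\IX$ degrades to $\weight(\bz\bL')\ge 1$, conclude via Corollary~\ref{coro:ic_decodability} that $\bL'$ is an $\HH$-IC, and invoke Lemma~\ref{lem:recovery}(2) to get $N-2\delta\ge\kG$. The opening detour about $\rank(\bL)\ge\kG$ is unnecessary, but the core argument in your second paragraph is exactly the paper's proof.
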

\begin{proof}
Let $\bL$ be the $n \times \NX$ matrix corresponding to some optimal {\dd}. 
Let $\bL'$ be the matrix obtained by deleting any $2 \delta$ 
columns from $\bL$. 

By Lemma~\ref{lem:decodability}, $\bL$ satisfies
\[
\weight\left(\sum_{i \in K} z_i \bL_i\right) \geq 2 \delta + 1 \; ,
\]
for all $K \in \JX$ and all choices of $z_i \in \fq^*$, $i \in K$. 
We deduce that the rows of $\bL'$ also satisfy
\[
\weight \left(\sum_{i \in K} z_i \bL'_i \right) \geq 1 \; .
\] 
By Corollary~\ref{coro:ic_decodability}, 
$\bL'$ corresponds to a linear {\mic}.  
Therefore, by Lemma~\ref {lem:recovery}, part 2, $\bL'$ has at least $\kG$ columns. We deduce that
\[
\NX - 2 \delta \geq \kG \; ,
\]
which concludes the proof. 
\end{proof}
\vskip 10pt

The following corollary from Proposition~\ref{pro:upperbound} and Theorem~\ref{thrm:singleton} demonstrates  
that, for sufficiently large alphabets, a concatenation of a classical MDS error-correcting code
with an optimal (non-error-correcting) index code yields an optimal ECIC. 
However, as it was illustrated in Example~\ref{ex:1}, this does not hold for the index coding 
schemes over small alphabets.

\vskip 10pt
\begin{corollary} [MDS error-correcting index code]
For $q \geq \kG + 2 \delta - 1$, 
\begin{equation}
\NX = \kG + 2 \delta \; . 
\label{eq:mds}
\end{equation}
\end{corollary}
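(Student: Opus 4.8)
The plan is to sandwich $\NX$ between the two bounds already established and observe that for large $q$ they coincide. By the $\kp$-bound (Proposition~\ref{pro:upperbound}), we have $\NX \le \Nk$. Now the right-hand side is the length of an optimal linear code of dimension $\kG$ and minimum distance $2\delta+1$. The classical Singleton bound for linear codes states that any $[N,k,d]_q$ code satisfies $N \ge k + d - 1$, so $N_q[k,d] \ge k+d-1$ always; conversely, whenever an MDS code of these parameters exists — which is guaranteed for instance by Reed--Solomon codes as long as the alphabet is large enough relative to the length — this bound is met with equality. The hard part, such as it is, is simply to verify that the hypothesis $q \ge \kG + 2\delta - 1$ is exactly the condition under which a Reed--Solomon (or extended Reed--Solomon) code of length $N = \kG + 2\delta$, dimension $\kG$, and minimum distance $2\delta+1$ exists over $\fq$.

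More concretely, first I would note that a Reed--Solomon code of dimension $k$ and distance $d = n-k+1$ exists over $\fq$ whenever the length $n$ satisfies $n \le q$ (using $q$ distinct evaluation points), and an extended/doubly-extended construction pushes this to $n \le q+1$ or $n \le q+2$. Here we need length $n = \kG + 2\delta$ and distance $d = 2\delta + 1 = n - \kG + 1$, so $k = \kG$; the existence condition $n \le q+1$ becomes $\kG + 2\delta \le q+1$, i.e. $q \ge \kG + 2\delta - 1$, which is precisely the hypothesis. Hence under this hypothesis $N_q[\kG, 2\delta+1] = \kG + 2\delta$, so the $\kp$-bound gives $\NX \le \kG + 2\delta$. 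Combined with the Singleton bound of Theorem~\ref{thrm:singleton}, namely $\NX \ge \kG + 2\delta$, we obtain the equality~(\ref{eq:mds}).

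I would also remark, for completeness, on the interpretation: the optimal code is obtained by the concatenated construction in the proof of Proposition~\ref{pro:upperbound}, taking $\bM$ to be the generator matrix of the MDS code; the $\al$-bound plays no role here since it need not be tight (as Example~\ref{ex:1} shows), but the Singleton bound is. The only point requiring care is which flavor of MDS code to invoke so that the numerical condition matches exactly; using an ordinary Reed--Solomon code would give the slightly weaker threshold $q \ge \kG + 2\delta$, whereas the extended version yields $q \ge \kG + 2\delta - 1$. I expect this bookkeeping about extended Reed--Solomon codes to be the only genuine obstacle, and it is a standard fact from coding theory.
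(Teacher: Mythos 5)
Your proof is correct and follows essentially the same route as the paper: combine the Singleton bound of Theorem~\ref{thrm:singleton} with the $\kp$-bound of Proposition~\ref{pro:upperbound}, and observe that (extended) Reed--Solomon codes give $N_q[\kG,2\delta+1]=\kG+2\delta$ precisely when $q\ge\kG+2\delta-1$. Your bookkeeping on which extension of RS codes matches the threshold is the same point the paper handles by invoking doubly-extended RS codes, so there is nothing to add.
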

\begin{proof}
From Theorem~\ref{thrm:singleton}, we have 
\[
\NX \geq \kG + 2 \delta \; . 
\]
On the other hand, from Proposition~\ref{pro:upperbound}, 
\[
\NX \leq N_q[\kG, 2\delta+1] = \kG + 2 \delta \; ,
\]
for $q \geq \kG + 2 \delta - 1$ (by taking doubly-extended Reed-Solomon (RS) codes). Therefore, for these $q$,~(\ref{eq:mds}) holds.
\end{proof}

\medskip
\begin{remark}
Let $q = 2$, $m = n = 2\ell + 1$ ($\ell \geq 2$), and $f(i) = i$ for all $i \in [m]$. 
Let $\X_1 = \{2,n\}$ and $\X_n = \{1,n-1\}$. 
For $2 \leq i \leq n$, let $\X_i = \{i-1,i+1\}$. Let $\HH = \HH(n,n,\X,f)$. 
Then $\G_\HH$ is the (symmetric) odd cycle of length $n$. 
Therefore, $\al(\HH) = \al(\G_\HH) = \ell$. From~\cite{Yossef-journal}, $\kp_2(\HH) = \mrt(\G_\HH) = \ell + 1$. 
From $\al$-bound,
\[
	\N_2[\HH,\delta] \geq N_2[\ell,2\delta + 1] \; .
\]
By contrast, from Theorem~\ref{thrm:singleton},
\[
	\N_2[\HH,\delta] \geq (\ell + 1) + 2\delta \; .
\]
As there are no nontrivial binary MDS codes, we have
\[
N_2[\ell,2\delta + 1] \geq \ell + 2\delta + 1 \; ,
\]	
for all choices of $\delta>0$. 
Therefore, for these choices, the $\al$-bound is at least as good as the Singleton bound. 
\end{remark}

\section{Random codes}
\label{sec:random}

In this section we prove an inexplicit upper bound on the optimal length of the ECIC's. The proof is based on constructing a random ECIC and analyzing its parameters.  

\medskip
\begin{theorem}
\label{thrm:random}
Let $\HH = \HH(m,n,\X,f)$ describe an instance of the ICSI problem. 
Then there exists a {\dd} over $\fq$ of length $N$ if
\[
\sum_{i \in [m]} q^{n - |\X_i| - 1} < \frac{q^N}{V_q(N, 2\delta)} \; , 
\]
where    
\begin{equation}
\label{eq:sphere}
V_q(N, 2\delta) = \sum_{\ell = 0}^{2 \delta} {N \choose \ell} (q-1)^\ell
\end{equation}
is the volume of the $q$-ary sphere in $\fq^N$. 

\end{theorem}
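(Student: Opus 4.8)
The plan is to use the probabilistic method: construct a random $n \times N$ matrix $\bL$ over $\fq$ by choosing every entry independently and uniformly, and show that under the stated inequality the probability that $\bL$ fails to satisfy the decodability condition of Lemma~\ref{lem:decodability} is strictly less than $1$. By Lemma~\ref{lem:decodability}, $\bL$ corresponds to a $(\delta,\HH)$-ECIC if and only if $\weight(\bz\bL) \ge 2\delta+1$ for every $\bz \in \IX$. So the ``bad'' event is that there exists $\bz \in \IX$ with $\weight(\bz\bL) \le 2\delta$, and it suffices to show that the union bound over all $\bz \in \IX$ of the probabilities $\Pr[\weight(\bz\bL) \le 2\delta]$ is less than $1$.

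The key computation is the estimate of $\Pr[\weight(\bz\bL) \le 2\delta]$ for a fixed nonzero $\bz$. Since $\bz \neq \bO$, the vector $\bz\bL$ is a uniformly random vector in $\fq^N$ (each coordinate $\bz\bL^{(j)}$ is an independent nontrivial $\fq$-linear combination of the i.i.d.\ uniform entries in column $j$, hence uniform on $\fq$, and the columns are independent). Therefore $\Pr[\weight(\bz\bL) \le 2\delta] = V_q(N,2\delta)/q^N$, where $V_q(N,2\delta)$ is the sphere volume defined in~(\ref{eq:sphere}). The union bound then gives
\[
\Pr[\bL \text{ is bad}] \le |\IX| \cdot \frac{V_q(N,2\delta)}{q^N}.
\]

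Next I would bound $|\IX|$. By definition, $\bz \in \IX$ means there is some $i \in [m]$ with $\bz_{\X_i} = \bO$ and $z_{f(i)} \neq 0$; the number of such $\bz$ for a fixed $i$ is at most $(q-1)\cdot q^{\,n - |\X_i| - 1}$ (the coordinate $f(i)$ ranges over $\fq^*$, the $|\X_i|$ coordinates in $\X_i$ are forced to $0$, and the remaining $n - |\X_i| - 1$ coordinates are free), so by a further union bound $|\IX| \le \sum_{i\in[m]} (q-1) q^{\,n-|\X_i|-1}$. Actually the cleanest route is to avoid double counting $\bz$ and instead directly union-bound over pairs $(i,\bz)$ with $i\in[m]$, $\bz_{\X_i}=\bO$, $z_{f(i)}\neq 0$: this gives $\Pr[\bL\text{ bad}] \le \sum_{i\in[m]} (q-1) q^{\,n-|\X_i|-1}\cdot V_q(N,2\delta)/q^N$. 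Hmm — this carries an extra factor $(q-1)$ compared with the theorem statement. I expect the intended argument fixes only the \emph{support} and the relevant structure more carefully, or uses the equivalent formulation with $\JX$: for each $i$ and each subset $Y_i \subseteq \Y_i$, the condition on $\bz$ restricted to $K = \{f(i)\}\cup Y_i$ is that $z_{f(i)}\neq 0$ while the other coordinates of $\bz$ vanish outside $K$; summing geometric series over the sizes of $Y_i$ recovers exactly $q^{\,n-|\X_i|-1}$ choices once one also sums over which coordinate of $\bz\bL$ could drop. The main obstacle is therefore bookkeeping: getting the counting of $\IX$ (or of the relevant pairs) to come out to exactly $\sum_i q^{\,n-|\X_i|-1}$ rather than a constant multiple of it, and then concluding that $\sum_i q^{\,n-|\X_i|-1} < q^N/V_q(N,2\delta)$ forces $\Pr[\bL\text{ bad}]<1$, so a good $\bL$ exists.
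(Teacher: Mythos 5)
Your overall strategy --- random $\bL$ with i.i.d.\ uniform entries, Lemma~\ref{lem:decodability}, and a union bound --- is exactly the paper's, and your key probability estimate $\Pr[\weight(\bz\bL)\le 2\delta]=V_q(N,2\delta)/q^N$ for fixed $\bz\ne\bO$ is correct. But the gap you flag at the end is real, and neither of your proposed repairs closes it. Union-bounding over all $\bz\in\IX$ (or over pairs $(i,\bz)$) genuinely produces $\sum_{i\in[m]}(q-1)q^{n-|\X_i|-1}$ terms, and no amount of reorganizing the count via $\JX$ or ``summing geometric series over the sizes of $Y_i$'' will remove the factor $q-1$: that factor comes from the choice of the nonzero value $z_{f(i)}\in\fq^*$, not from the choice of support. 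The missing observation is scale invariance: for any $c\in\fq^*$, $\weight\bigl((c\bz)\bL\bigr)=\weight(\bz\bL)$, so the bad events for $\bz$ and $c\bz$ coincide and you may normalize $z_{f(i)}=1$. This leaves exactly $q^{\,n-|\X_i|-1}$ representatives per receiver $i$ (one for each choice of $\bz_{\Y_i}\in\fq^{|\Y_i|}$, with $|\Y_i|=n-|\X_i|-1$), which is precisely the count in the statement. With that one line added, your argument is complete.

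The paper packages this same cancellation differently: invoking Corollary~\ref{cor:decodability}, it defines $E_i=\{\dist(\bL_{f(i)},\bM_i)<2\delta+1\}$ with $\bM_i=\spn(\{\bL_j\}_{j\in\Y_i})$, and bounds $\Pr(E_i)\le q^{|\Y_i|}V_q(N,2\delta)/q^N$ by a union bound over the at most $q^{|\Y_i|}$ points of $\bM_i$, using that $\bL_{f(i)}$ is uniform and independent of the rows indexed by $\Y_i$ (note $f(i)\notin\Y_i$). The distance formulation absorbs the $q-1$ automatically, since $\weight(z_{f(i)}\bL_{f(i)}+\bw)=\weight(\bL_{f(i)}+z_{f(i)}^{-1}\bw)$ and $z_{f(i)}^{-1}\bw$ ranges over $\bM_i$ as $\bw$ does --- the same normalization you need, just built into the geometry.
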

\begin{proof}
We construct a random $n \times N$ matrix $\bL$ over $\fq$, row by row. 
Each row is selected independently of other rows, uniformly over $\fq^N$.
Define vector spaces 
\[
\bM_i \define \mbox{span} \left(  \{ \bL_j \; : \; j \in \Y_i \} \right) 
\]  
for all $i \in [m]$.
We also define the following events:
\begin{eqnarray*}
\forall i \in [m] \; : \; \mbox{ Event $E_i$} \; \define \; \left\{ \dist( \bL_{f(i)}, \bM_i ) < 2 \delta + 1 \right\} \; ,
\end{eqnarray*}
and 
\begin{multline*}
\mbox{Event $E_{{Fail}}$} \; \define \\
\; \left\{ \mbox{$\bL$ does not correspond to a {\dd}} \right\} \; .    
\end{multline*}

The event $E_i$ represents the situation when the receiver $R_i$ cannot recover $x_{f(i)}$. 
Then, by Corollary~\ref{cor:decodability}, the event $E_{{Fail}}$ is equivalent to $\bigcup_{i \in [m]} E_i$. Therefore, 
\begin{equation}
\Pr \left( E_{{Fail}} \right) = \Pr \left( \bigcup_{i \in [m]} E_i \right) \le \sum_{i \in [m]} \Pr \left( E_i \right) \; . 
\label{eq:failure}
\end{equation} 
For a particular event $E_i$, $i \in [m]$, 
\begin{equation}
\Pr \left( E_i \right) \le \frac{q^{|\Y_i|} \, V_q(N, 2\delta)}{q^N} \; . 
\label{eq:e_i}
\end{equation}

There exists a matrix $\bL$ that corresponds to a {\dd} if $\Pr \left( E_{{Fail}} \right) < 1$. It is enough to require 
that the right-hand side of~(\ref{eq:failure}) is smaller than $1$. By plugging in the expression in~(\ref{eq:e_i}), 
we obtain a sufficient condition on the existence of a {\dd} over $\fq$: 
\[
\frac{V_q(N, 2\delta)}{q^N} \, \sum_{i \in [m]} q^{|\Y_i|} < 1 \; .
\]
\end{proof}
\vskip 10pt 

\begin{remark}
The bound in Theorem~\ref{thrm:random} does not take into account 
the structure of the sets $\X_i$'s, other than their cardinalities.
Therefore, this bound generally is weaker than
the $\kappa$-bound. On the other hand, for a particular instance of 
the ICSI problem, it is easier to compute this bound, while calculating
the $\kappa$-bound in general is an NP-hard problem.
\end{remark}
\medskip

\begin{remark}
The bound in Theorem~\ref{thrm:random} implies a bound on $\kp_q(\HH)$, which is tight for some $\HH$. 
Indeed, fix $\delta = 0$. The bound implies that there exists a linear index code of length $N$ whenever 
\begin{equation}
\sum_{i \in [m]} q^{n - |\X_i| - 1} < q^N \; . 
\label{eq:kappa-bound}
\end{equation}
Let $m = n = 2\ell + 1$ ($\ell \geq 2$), and $f(i) = i$ for all $i \in [n]$. 
Let $\X_1 = [n] \backslash \{1,2,n\}$ and $\X_n = [n] \backslash \{1,n-1,n\}$. 
For $2 \leq i \leq n - 1$, let $\X_i = [n] \backslash \{i-1,i,i+1\}$. 
Let $\HH = \HH(n,n,\X,f)$ be the corresponding side information hypergraph.  
Then $\G_\HH$ is the complement of the (symmetric directed) odd cycle of length $n$. 
We have $|\X_i| = 2\ell  - 2$ for all $i \in [n]$. 
Then~(\ref{eq:kappa-bound}) becomes
\[
N > 2 + \log_q(2\ell + 1) \; . 
\]
If $q > 2 \ell + 1$ then we obtain $N \ge 3$. Observe that in this case $\kp_q(\HH) = \mrq(\G_\HH)=3$ (see~\cite[Claim A.1]{Alon}), and thus the bound is tight.
\end{remark}

\section{Syndrome decoding}
\label{subsec:decoding}

Consider the $(\delta, \HH)$-ECIC based on a matrix $\bL$. 
Suppose that the receiver $R_i$, $i \in [m]$, receives the vector  
\begin{equation}
\by_i = \bx \bL + \bep_i \; ,
\label{eq:error-i}
\end{equation}
where $\bx \bL$ is the codeword transmitted by $S$, and $\bep_i$
is the error pattern affecting this codeword. 

In the classical coding theory, the transmitted vector $\bc$, the received vector $\by$, and the error pattern $\be$ are
related by $\by = \bc + \be$. Therefore, if $\by$ is known to the receiver, then there is a one-to-one correspondence  
between the values of unknown vectors $\bc$ and $\be$. For index coding, however, this is no longer the case. 
The following theorem shows that, in order to recover 
the message $x_{f(i)}$ from $\by_i$ using~(\ref{eq:error-i}), it is sufficient to find just one 
vector from a set of possible error patterns. This set is defined as follows: 
\[
\LL_i(\bep_i) = \left\{ \bep_i + \bz \; : \; \bz \in \spn(\{\bL_j\}_{j \in \Y_i}) \right\} \; . 
\] 
We henceforth refer to the set $\LL_i(\bep_i)$ as the \emph{set of relevant error patterns}. 

\vskip 10pt
\begin{lemma}
\label{thm:relevant_e}
Assume that the receiver $R_i$ receives $\by_i$.
\begin{enumerate}
\item
If $R_i$ knows the message $x_{f(i)}$ then it is able to 
determine the set $\LL_i(\bep_i)$. 
\item
If $R_i$ knows some vector $\beph \in \LL_i(\bep_i)$ then it is able to 
determine $x_{f(i)}$. 
\end{enumerate}
\end{lemma}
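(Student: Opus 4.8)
The plan is to unwind the two claims directly from the linear structure of the code and the definition of $\LL_i(\bep_i)$. Write $\bM_i = \spn(\{\bL_j\}_{j \in \Y_i})$ throughout, so that $\LL_i(\bep_i) = \bep_i + \bM_i$. Recall that $[n] = \{f(i)\} \cup \X_i \cup \Y_i$ is a disjoint partition, and that $R_i$ knows $\bx_{\X_i}$ and receives $\by_i = \bx\bL + \bep_i$.

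For part (1), suppose $R_i$ additionally knows $x_{f(i)}$. Then $R_i$ knows $x_j$ for every $j \in \X_i \cup \{f(i)\}$, hence it can compute the partial linear combination $\sum_{j \in \X_i \cup \{f(i)\}} x_j \bL_j$. Subtracting this from $\by_i$ yields
\[
\by_i - \sum_{j \in \X_i \cup \{f(i)\}} x_j \bL_j \;=\; \bep_i + \sum_{j \in \Y_i} x_j \bL_j \;\in\; \bep_i + \bM_i \;=\; \LL_i(\bep_i).
\]
So $R_i$ obtains one element of the coset $\LL_i(\bep_i)$; since $\bM_i$ itself is computable from $\bL$ (which $R_i$ knows), $R_i$ recovers the whole coset. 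That settles part (1).

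For part (2), suppose $R_i$ knows some $\beph \in \LL_i(\bep_i)$, i.e. $\beph = \bep_i + \bw$ for some $\bw \in \bM_i$, say $\bw = \sum_{j \in \Y_i} w_j \bL_j$. Then $R_i$ forms
\[
\by_i - \beph - \sum_{j \in \X_i} x_j \bL_j \;=\; \bx\bL + \bep_i - \bep_i - \bw - \sum_{j \in \X_i} x_j \bL_j \;=\; x_{f(i)}\bL_{f(i)} + \sum_{j \in \Y_i}(x_j - w_j)\bL_j.
\]
Call this vector $\br$. By Corollary~\ref{coro:ic_decodability}, since $\bL$ corresponds to a $(\delta,\HH)$-ECIC it also corresponds to an $\HH$-IC, so $\bL_{f(i)} \notin \bM_i$. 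Hence $\br \in x_{f(i)}\bL_{f(i)} + \bM_i$ determines the scalar $x_{f(i)}$ uniquely: if $x_{f(i)}\bL_{f(i)} + \bu = x'_{f(i)}\bL_{f(i)} + \bu'$ with $\bu,\bu' \in \bM_i$ then $(x_{f(i)} - x'_{f(i)})\bL_{f(i)} \in \bM_i$, forcing $x_{f(i)} = x'_{f(i)}$. Concretely, $R_i$ can solve for $x_{f(i)}$ by testing each of the $q$ candidate values $c \in \fq$ and checking whether $\br - c\,\bL_{f(i)} \in \bM_i$; exactly one $c$ works, and it equals $x_{f(i)}$.

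The argument is essentially bookkeeping on cosets of $\bM_i$; the only substantive input is the non-error-correcting recoverability condition $\bL_{f(i)} \notin \bM_i$ from Corollary~\ref{coro:ic_decodability}, which guarantees the coset $x_{f(i)}\bL_{f(i)} + \bM_i$ pins down $x_{f(i)}$. I do not anticipate a genuine obstacle here — the care needed is only in keeping the three index blocks $\{f(i)\}$, $\X_i$, $\Y_i$ straight and in noting that $R_i$ has enough information (namely $\bL$, $\bx_{\X_i}$, $\by_i$, and the extra datum supplied in each part) to carry out each computation.
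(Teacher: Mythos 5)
Your proposal is correct and follows essentially the same route as the paper: part (1) is the same subtraction argument producing one coset representative of $\bep_i + \spn(\{\bL_j\}_{j \in \Y_i})$, and part (2) is the same linear-algebra observation that the coefficient of $\bL_{f(i)}$ in a vector of $x_{f(i)}\bL_{f(i)} + \spn(\{\bL_j\}_{j \in \Y_i})$ is uniquely determined because $\bL_{f(i)}$ lies outside that span (the paper phrases this as solving a linear system and invoking Corollary~\ref{cor:decodability}, you phrase it as a coset argument via Corollary~\ref{coro:ic_decodability}; the content is identical). No gaps.
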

\begin{proof}
\begin{enumerate}
\item
From~(\ref{eq:error-i}), we have
\begin{equation} 
\label{2:E10}
\by_i = x_{f(i)} \bL_{f(i)} + \bx_{\X_i}\bL_{\X_i} + \bx_{\Y_i}\bL_{\Y_i} +\bep_i \; .
\end{equation} 
If $R_i$ knows $x_{f(i)}$, then it is also able to determine
\[
\bep_i + \bx_{\Y_i}\bL_{\Y_i} = \by_i - x_{f(i)} \bL_{f(i)} - \bx_{\X_i}\bL_{\X_i} \in \LL_i(\bep_i) \; .
\]
Since $R_i$ has a knowledge of $\bL$, it is also able to determine the whole $\LL_i(\bep_i)$. 

\item
Suppose that $R_i$ knows a vector
\[
\beph= \bep_i + \sum_{j \in \Y_i} z_j \bL_j \in \LL_i(\bep_i) \; ,
\] 
for some $\bz = (z_j)_{\Y_i} \in \fq^{|\Y_i|}$. We show that $R_i$ is able then to determine $x_{f(i)}$.  
Indeed, we re-write (\ref{2:E10}) as
\begin{equation}
\by_i = x_{f(i)}\bL_{f(i)} + \bx_{\X_i}\bL_{\X_i} + (\bx_{\Y_i}- \bz)\bL_{\Y_i} + \beph\; . 
\label{eq:y-i-1}
\end{equation}
The receiver $R_i$ can find some solution of the equation 
\begin{equation}
\label{2:E11} 
\by_i = \hat{x}_{f(i)} \bL_{f(i)} + \bx_{\X_i}\bL_{\X_i} + \bxh_{\Y_i} \bL_{\Y_i}  + \beph\;, 
\end{equation}
with respect to the unknowns $\hat{x}_{f(i)}$ and $\bxh_{\Y_i}$. 
Observe that~(\ref{2:E11}) has at least one solution due to~(\ref{eq:y-i-1}).

From~(\ref{eq:y-i-1}) and~(\ref{2:E11}), we deduce that 
\[
\bO = (\hat{x}_{f(i)} - x_{f(i)})\bL_{f(i)} + (\bxh_{\Y_i} - \bx_{\Y_i} + \bz) \bL_{\Y_i} \; .
\]
This equality implies that $\hat{x}_{f(i)} = x_{f(i)}$ (otherwise, by Corollary~\ref{cor:decodability}, the sum in
the right-hand side will have nonzero weight). 
Hence, $R_i$ is able to determine $x_{f(i)}$, as claimed.  
\end{enumerate}
\end{proof}
\vskip 10pt 

We now describe a syndrome decoding algorithm for linear error-correcting index codes.
From (\ref{2:E10}), we have
\[
\by_i - \bx_{\X_i}\bL_{\X_i} - \bep_i \in \spn\big(\{\bL_{f(i)}\} \cup \{\bL_j\}_{j \in \Y_i}\big) \; .
\]
Let $\cC_i = \spn(\{\bL_{f(i)}\} \cup \{\bL_j\}_{j \in \Y_i})$, and let $\bHi$ be a parity check matrix 
of $\cC_i$. We obtain that  
\begin{equation} 
\bHi \bep_i^T = \bHi(\by_i - \bx_{\X_i}\bL_{\X_i})^T \; .
\label{eq:beta-i}
\end{equation} 
Let $\bbti$ be a column vector defined by
\begin{equation} 
\label{2:E12}
\bbti = \bHi(\by_i - \bx_{\X_i}\bL_{\X_i})^T \; .
\end{equation} 
Observe that each $R_i$ is capable of determining $\bbti$. 
Then we can re-write~(\ref{eq:beta-i}) as
\[
\bHi \bep_i^T = \bbti \; . 
\]
This leads us to the formulation of the following decoding procedure for $R_i$. 

{
\begin{figure}[htb]
\hrule
\vspace{2ex}
\begin{itemize}
  \item {\it Input:} $\by_i$, $\bx_{\X_i}$, $\bL$. 
  \vspace{1ex}
	\item {\it Step 1}: Compute the syndrome
	\[
	\bbti = \bHi(\by_i - \bx_{\X_i}\bL_{\X_i})^T \; . 
	\]  
  \item {\it Step 2}: Find the lowest Hamming weight solution $\hat{\bep}$ of the system
\begin{equation}
\label{2:E13}
\bHi\hat{\bep}^T = \bbti \; .
\end{equation} 
   \item {\it Step 3}: Given that $\bxh_{\X_i} = \bx_{\X_i}$, solve the system for $\hat{x}_{f(i)}$: 
	    \begin{equation} 
			\label{2:E15}
			\by_i = \hat{\bx} \bL + \hat{\bep}.
			\end{equation} 
   \item {\it Output:} $\hat{x}_{f(i)}$.
\end{itemize}
\vspace{2ex}
\hrule
\vspace{1ex}
\caption{Syndrome decoding procedure.}
\label{fig:decoder}
\end{figure}
}
\vskip 40pt

\begin{remark}
Gaussian elimination can be used to solve~(\ref{2:E15}) for $\hat{x}_{f(i)}$. 
However, since $\bL$ also corresponds to an {\mic}, there is more efficient way to do so. From Lemma~\ref{lem:recovery},
there exists a vector $\bv_i \lhd \X_i$ satisfying 
$\bv_i + \be_{f(i)} \in \clspn(\bL)$. Hence $\bv_i + \be_{f(i)} = \bu \bL^T$ for some $\bu \in \fq^N$.
Therefore 
\[
\begin{split} 
\hat{x}_{f(i)} &= \hat{\bx}(\bv_i + \be_{f(i)})^T - \hat{\bx}\bv_i^T \\
&= \hat{\bx}\bL \bu^T - \hat{\bx}\bv_i^T\\
&= (\by_i - \hat{\bep})\bu^T - \hat{\bx}\bv_i^T.
\end{split} 
\] 
With the knowledge of $\bL$ and $\bx_{\X_i}$, $R_i$ can determine $\bu$ and $\hat{\bx}\bv_i^T$. 
Therefore, it can also determine $\hat{x}_{f(i)}$. Note that (\ref{2:E15}) may have more than one solution
$\hat{\bx}$ with $\bxh_{\X_i} = \bx_{\X_i}$. However, as shown in the next theorem, if at most $\delta$
errors occur in $\by_i$, then it always holds that $\hat{x}_{f(i)} = x_{f(i)}$.  
\end{remark}

\medskip
\begin{theorem}
Let $\by_i = \bx \bL + \bep_i$ be the vector received by $R_i$, and let $\weight(\bep_i) \le \delta$. 
Assume that the procedure in Figure~\ref{fig:decoder} is applied to $(\by_i, \bx_{\X_i}, \bL)$. 
Then, its output satisfies $\hat{x}_{f(i)} = x_{f(i)}$. 
\end{theorem}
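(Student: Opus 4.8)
The plan is to trace the true error vector $\bep_i$ through the three steps of the procedure in Figure~\ref{fig:decoder}. The essential observation will be that, even though the linear system solved in Step~2 has the entire coset $\bep_i + \cC_i$ as its solution set, its minimum-weight solution $\beph$ is forced to belong to the smaller set of relevant error patterns $\LL_i(\bep_i) = \bep_i + \bM_i$; granting this, part~2 of Lemma~\ref{thm:relevant_e} --- or, more precisely, the computation inside its proof --- shows that Step~3 outputs $x_{f(i)}$.

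First I would expand the received word as in~(\ref{2:E10}): $\by_i = x_{f(i)}\bL_{f(i)} + \bx_{\X_i}\bL_{\X_i} + \bx_{\Y_i}\bL_{\Y_i} + \bep_i$. Since $x_{f(i)}\bL_{f(i)} + \bx_{\Y_i}\bL_{\Y_i} \in \cC_i$ and $\bHi$ is a parity-check matrix of $\cC_i$, the syndrome computed in Step~1 collapses to $\bbti = \bHi\bep_i^{T}$, so the solution set of the system~(\ref{2:E13}) in Step~2 is exactly the coset $\bep_i + \cC_i$. Because $\bep_i$ lies in this coset and $\weight(\bep_i) \le \delta$, the minimum-weight solution $\beph$ satisfies $\weight(\beph) \le \weight(\bep_i) \le \delta$, hence $\weight(\beph - \bep_i) \le 2\delta$.

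The hard part will be to upgrade the membership $\beph - \bep_i \in \cC_i$ to $\beph - \bep_i \in \bM_i$. I would decompose $\beph - \bep_i = c\,\bL_{f(i)} + \bm$ with $c \in \fq$ and $\bm \in \bM_i$ and suppose, for contradiction, that $c \ne 0$. Scaling by $c^{-1}$ preserves Hamming weight, so $\weight\big(\bL_{f(i)} + c^{-1}\bm\big) = \weight(\beph - \bep_i) \le 2\delta$; as $-c^{-1}\bm \in \bM_i$, this gives $\dist(\bL_{f(i)}, \bM_i) \le 2\delta$, contradicting Corollary~\ref{cor:decodability}, which asserts $\dist(\bL_{f(i)}, \bM_i) \ge 2\delta + 1$ since $\bL$ corresponds to a $(\delta,\HH)$-ECIC. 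Hence $c = 0$, i.e. $\beph \in \LL_i(\bep_i)$.

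Finally, writing $\beph = \bep_i + \sum_{j \in \Y_i} z_j \bL_j$ and $\bz = (z_j)_{j \in \Y_i}$, I would observe that the true vector $\bx$, with its $\Y_i$-part shifted by $\bz$, solves the system~(\ref{2:E15}) subject to $\bxh_{\X_i} = \bx_{\X_i}$, so the system in Step~3 admits a solution; and for every such solution, subtracting the relation~(\ref{eq:y-i-1}) (in the true values) from~(\ref{2:E11}) (in the computed values) yields $\bO = (\hat{x}_{f(i)} - x_{f(i)})\bL_{f(i)} + (\bxh_{\Y_i} - \bx_{\Y_i} + \bz)\bL_{\Y_i}$, which forces $\hat{x}_{f(i)} = x_{f(i)}$ by a second use of Corollary~\ref{cor:decodability} (were the difference nonzero, the right-hand side would be a nonzero scalar multiple of $\bL_{f(i)}$ plus a vector of $\bM_i$, hence of weight at least $2\delta + 1 > 0$). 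Thus the output is $\hat{x}_{f(i)} = x_{f(i)}$, as claimed. Note that the hypothesis $\weight(\bep_i) \le \delta$ is used only once, to bound $\weight(\beph - \bep_i) \le 2\delta$ in the previous paragraph.
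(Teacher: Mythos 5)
Your proposal is correct and follows essentially the same route as the paper: show $\beph-\bep_i\in\cC_i$ via the syndrome, use minimality of weight to get $\weight(\beph-\bep_i)\le 2\delta$, invoke Corollary~\ref{cor:decodability} to kill the $\bL_{f(i)}$-component so that $\beph\in\LL_i(\bep_i)$, and conclude by Lemma~\ref{thm:relevant_e}. The only difference is that you spell out the scaling-by-$c^{-1}$ step and re-derive part~2 of Lemma~\ref{thm:relevant_e} instead of citing it, which the paper leaves implicit.
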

\begin{proof}
By Lemma~\ref{thm:relevant_e}, it is sufficient to prove that $\hat{\bep} \in \LL_i(\bep_i)$. 
Indeed, since
\[
\bHi \bep_i^T = \bHi \hat{\bep}^T = \bbti \; ,
\]
we have
\[
\bHi (\hat{\bep} - \bep_i)^T = \bO \; .
\]
Hence, $\hat{\bep} - \bep_i \in \cC_i$, and therefore,
\begin{equation} 
\label{2:E14}
\hat{\bep} - \bep_i = z_{f(i)} \bL_{f(i)} + \sum_{j \in \Y_i}z_j\bL_j,
\end{equation}
for some $z_{f(i)} \in \fq$ and $z_j \in \fq$, $j \in \Y_i$. 

Since $\bep_i$ is a solution of~(\ref{2:E13}), and $\weight(\bep_i) \leq \delta$, we 
deduce that $\weight(\hat{\bep}) \leq \delta$ as well. Hence, 
\[
\weight \left( z_{f(i)} \bL_{f(i)} + \sum_{j \in \Y_i} z_j \bL_j \right) = \weight \left(\hat{\bep} - \bep_i \right) \leq 2\delta  \; .
\]
Therefore, by Corollary~\ref {cor:decodability}, $z_{f(i)} = 0$. Hence, $\hat{\bep} \in \LL_i(\bep_i)$, as desired,
and therefore $\hat{x}_{f(i)} = x_{f(i)}$. 
\end{proof}
\vskip 10pt 

\begin{remark}
We anticipate Step 2 in Figure~\ref{fig:decoder} to be computationally hard. 
Indeed, the problem of finding $\hat{\bep}$ over $\ff_2$ of the lowest weight satisfying
\begin{equation} 
\label{EEE3}
\bHi \hat{\bep}^T = \bbti \; ,
\end{equation} 
for a given binary vector $\bbti$ is at least as hard as a decision problem 
{\sc coset weights} that was shown in~\cite{Berlekamp1978} to be NP-complete.
\end{remark}
\vskip 10pt

\section{Static Codes and Related Problems}
\label{sec:StaticECIC}

\subsection{Static Error-Correcting Index Codes}

In the previous sections we focused on linear $\delta$-error-correcting index codes 
for a \emph{particular} instance of the ICSI problem. 
When some of the parameters $m$, $n$, $\X$, and $f$ are variable or not known, it is very likely that
an error-correcting index code for the instance with particular values of these parameters can not be used for 
the instances with different values of some of these parameters. 
Therefore, it is interesting to design an error-correcting 
index code which will be suitable for a \emph{family} of instances of the ICSI problem. 

\vskip 10pt 
\begin{definition}
\label{def:static}
Let $\Gamma = \{ (m, n, \X, f) \}$ be a set of instances for an ICSI problem. 
A $\delta$-error-correcting index code over $\fq$ is said to be \emph{static}
under the set $\Gamma$ if it is a $\delta$-error-correcting $(m,n,\X,f)$-IC over 
$\fq$ for all instances $(m,n,\X,f) \in \Gamma$. 
\end{definition} 
\vskip 10pt 

Recall that an instance $(m,n,\X,f)$ can be described by the side information hypergraph $
\HH(m,n,\X,f)$. For a set $\Gamma$ of instances $(m,n,\X,f)$, let
\begin{equation} 
\label{fkJdef}
\fkJ(\Ga) \define \bigcup_{(m,n,\X,f) \in \Ga} \J(\HH(m,n,\X,f)),
\end{equation} 
where $\J(\HH(m,n,\X,f))$ is defined as in (\ref{Jdef}).
We also define
\[
n(\Ga) \define \max \{n:\ (m,n,\X,f) \in \Ga\}.
\]

\vskip 10pt 
\begin{lemma}
\label{lem:decodability_static}
The $n(\Ga) \times N$ matrix $\bL$ corresponds to a $\delta$-error-correcting index code
which is static under $\Ga$ if and only if 
\[
\weight\left(\sum_{i \in K}z_i\bL_i\right) \geq 2\delta + 1,
\]
for all $K \in \fkJ(\Ga)$ and for all choices of $z_i \in \fq^*$, $i \in K$. 
\end{lemma}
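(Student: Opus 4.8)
The plan is to reduce Lemma~\ref{lem:decodability_static} to Lemma~\ref{lem:decodability} applied simultaneously to every instance in $\Ga$, so that almost no new work is needed. First I would observe that a $\delta$-error-correcting index code static under $\Ga$ is, by Definition~\ref{def:static}, precisely an object that is a $(\delta,\HH(m,n,\X,f))$-ECIC for \emph{each} $(m,n,\X,f)\in\Ga$ at once. The only subtlety in making this literal at the level of matrices is that different instances may have different values of $n$; this is why the statement uses an $n(\Ga)\times N$ matrix $\bL$. For an instance with $n<n(\Ga)$, the relevant rows of $\bL$ are those indexed by $[n]$, and the extra rows are simply irrelevant to that instance's decodability condition, since $\J(\HH(m,n,\X,f))\subseteq 2^{[n]}$ never involves them.

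Next I would invoke Lemma~\ref{lem:decodability} (the second, ``equivalently'' formulation) for a fixed instance $(m,n,\X,f)\in\Ga$: the restriction $\bL_{[n]}$ corresponds to a $(\delta,\HH(m,n,\X,f))$-ECIC over $\fq$ if and only if
\[
\weight\left(\sum_{i\in K}z_i\bL_i\right)\ge 2\delta+1
\]
for all $K\in\J(\HH(m,n,\X,f))$ and all choices of $z_i\in\fq^*$, $i\in K$. Intersecting this characterization over all instances in $\Ga$, and using $\fkJ(\Ga)=\bigcup_{(m,n,\X,f)\in\Ga}\J(\HH(m,n,\X,f))$ from~(\ref{fkJdef}), the conjunction of all the per-instance conditions collapses to the single condition that $\weight(\sum_{i\in K}z_i\bL_i)\ge 2\delta+1$ holds for every $K$ in the \emph{union} $\fkJ(\Ga)$ and every choice of nonzero coefficients. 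That is exactly the claimed condition, giving both directions of the equivalence at once.

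The main thing to be careful about — and what I'd expect a referee to probe — is the bookkeeping around varying $n$: one has to check that enlarging the ambient matrix from $n\times N$ to $n(\Ga)\times N$ by adjoining arbitrary extra rows neither helps nor hurts a given instance, because that instance's constraints only ever reference row indices in its own $[n]\subseteq[n(\Ga)]$, and every $K\in\J(\HH(m,n,\X,f))$ is a subset of that $[n]$. Once this is noted, there is no real obstacle; the lemma is a direct ``union of constraints'' repackaging of Lemma~\ref{lem:decodability}, and the proof is a few lines. I would also remark in passing that the equivalent ``weight$(\bz\bL)\ge 2\delta+1$ for all $\bz$ in the union of the $\I$-sets'' phrasing follows identically, should it be wanted.
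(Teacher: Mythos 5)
Your proposal is correct and matches the paper's approach exactly: the paper's own proof is the one-line observation that the claim "follows from Definition~\ref{def:static} and Lemma~\ref{lem:decodability}," i.e.\ the same intersection-of-per-instance-conditions argument via the union defining $\fkJ(\Ga)$. Your extra care about the varying-$n$ bookkeeping is also consistent with the paper, which handles it by remarking immediately after the proof that the last $n(\Ga)-n$ rows of $\bL$ are simply discarded for an instance with $n<n(\Ga)$.
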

\begin{proof}
The proof follows from Definition~\ref{def:static} and Lemma~\ref{lem:decodability}. 
\end{proof} 
\medskip
Please notice that when $\bL$ is used for an instance $(m,n,\X,f) \in \Ga$ with 
$n < n(\Ga)$, then the last $n(\Ga) - n$ rows of $\bL$ are simply discarded.  
\vskip 10pt 

One particular family of interest is $\Ga(n,\rho)$, the family that contains 
all instances where each receiver owns at least $n - \rho$ messages as its side information.
More formally, 
\begin{multline*}
\Ga(n,\rho) = \big\{ (m,n',\X,f) \; : \; n' \leq n \\
\mbox{ and } \forall i \in [m], \; |\X_i| \geq n - \rho \big\} \; .
\end{multline*}
A $\delta$-error-correcting index code which is 
static under $\Ga(n,\rho)$ will provide successful communication between the sender
and the receivers under the presence of at most $\delta$ errors, 
despite a possible change of the collection of the side information sets $\X$, 
a change of the set of receivers, and a change of the demand function, 
as long as each receiver still possesses at least $n - \rho$ messages. 
\vskip 10pt
In the rest of this section, we assume that $N \geq 1$, $n \geq \rho \geq 1$ and $\delta \geq 0$. 
\medskip
\begin{definition}
\label{def:property}
An $n \times N$ matrix $\bL$ is said to satisfy the $(\rho,\delta)$-Property
if any nontrivial linear combination of at most $\rho$ rows of $\bL$ has weight at least $2\delta + 1$.  
\end{definition}
\vskip 10pt

\begin{proposition}
\label{pro:static}
The $n \times N$ matrix $\bL$ corresponds to a $\delta$-error-correcting linear index code, 
which is static under $\Gnr$, if and only if $\bL$ satisfies the $(\rho,\delta)$-Property. 
\end{proposition}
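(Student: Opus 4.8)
The plan is to reduce Proposition~\ref{pro:static} to Lemma~\ref{lem:decodability_static} by showing that the combinatorial collection $\fkJ(\Gnr)$ coincides with the family of all subsets of $[n]$ of size at most $\rho$. Once this identification is established, the two conditions---``$\bL$ corresponds to a $\delta$-error-correcting index code static under $\Gnr$'' and ``$\bL$ satisfies the $(\rho,\delta)$-Property''---become literally the same statement, since the $(\rho,\delta)$-Property says precisely that $\weight(\sum_{i\in K} z_i \bL_i) \ge 2\delta+1$ for every $K$ with $|K| \le \rho$ and every choice of $z_i \in \fq^*$.

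The key step is therefore the set-theoretic claim
\[
\fkJ(\Gnr) = \big\{ K \subseteq [n] \; : \; 1 \le |K| \le \rho \big\}.
\]
For the inclusion $\subseteq$: take any instance $(m,n',\X,f) \in \Gnr$, so $n' \le n$ and $|\X_i| \ge n - \rho$ for all $i$. Recall from~(\ref{Jdef}) that a typical element of $\J(\HH(m,n',\X,f))$ has the form $\{f(i)\} \cup Y_i$ with $Y_i \subseteq \Y_i = [n'] \setminus (\{f(i)\} \cup \X_i)$. Its size is at most $1 + |\Y_i| = 1 + (n' - 1 - |\X_i|) = n' - |\X_i| \le n' - (n - \rho) \le \rho$, using $n' \le n$; and it is nonempty since it contains $f(i)$. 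For the reverse inclusion $\supseteq$: given any nonempty $K \subseteq [n]$ with $|K| \le \rho$, I would exhibit an explicit instance in $\Gnr$ whose $\J$ contains $K$. Write $K = \{k_0, k_1, \ldots, k_t\}$ with $t = |K| - 1 \le \rho - 1$; take $n' = n$, a single receiver $m = 1$, demand $f(1) = k_0$, and side information set $\X_1 = [n] \setminus K$. Then $|\X_1| = n - |K| \ge n - \rho$, so this instance lies in $\Gnr$; moreover $\Y_1 = [n] \setminus (\{k_0\} \cup \X_1) = K \setminus \{k_0\}$, and choosing $Y_1 = \Y_1$ gives $\{f(1)\} \cup Y_1 = K \in \J(\HH)$.

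Having established the identification, the proof concludes immediately: by Lemma~\ref{lem:decodability_static} with $\Ga = \Gnr$ (note $n(\Gnr) = n$), the matrix $\bL$ corresponds to a $\delta$-error-correcting index code static under $\Gnr$ if and only if $\weight(\sum_{i \in K} z_i \bL_i) \ge 2\delta + 1$ for all $K \in \fkJ(\Gnr)$ and all $z_i \in \fq^*$; by the claim this is exactly the requirement over all $K$ with $1 \le |K| \le \rho$, which is the $(\rho,\delta)$-Property of Definition~\ref{def:property}.

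I do not anticipate a genuine obstacle here; the content is entirely bookkeeping. The only point requiring a little care is the boundary behavior---checking that the size estimate still works when $n' < n$ (it does, since we bounded by $n' - |\X_i| \le n' - (n-\rho) \le \rho$), and confirming that the single-receiver instance used for the reverse inclusion genuinely satisfies $f(1) \notin \X_1$, which holds because $k_0 \in K$ while $\X_1 = [n] \setminus K$. One should also note explicitly that when $|K| > \rho$ no instance of $\Gnr$ can produce $K$, so the $(\rho,\delta)$-Property imposes no constraints beyond what static decodability under $\Gnr$ requires, making the equivalence exact rather than merely one-directional.
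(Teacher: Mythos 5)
Your proposal is correct and follows essentially the same route as the paper's own proof: both reduce to Lemma~\ref{lem:decodability_static} and then verify that $\fkJ(\Gamma(n,\rho))$ is exactly the family of nonempty subsets of $[n]$ of size at most $\rho$, using the same cardinality bound $1+|\Y_i| \le n' - (n-\rho) \le \rho$ for one inclusion and the same single-receiver instance with $\X_1 = [n]\setminus K$ for the other. No issues.
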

\begin{proof}
Let $\bL$ be an $n \times N$ matrix that satisfies the $(\rho,\delta)$-Property.
We show that this is equivalent to the condition that 
$\bL$ corresponds to a $\delta$-error-correcting linear index code, 
which is static under $\Gnr$. 
By Lemma~\ref{lem:decodability_static}, it suffices to show that $\fkJ(\Ga(n,\rho))$ is the collection
of all nonempty subsets of $[n]$, whose cardinalities are not greater than $\rho$. 

Consider an instance $(m,n',\X,f) \in \Ga(n,\rho)$. For all $i \in [m]$, we have $|\X_i| \geq n - \rho$ and 
$\Y_i = [n']\backslash (f(i) \cup \X_i)$, and thus we deduce that
\[
|\Y_i| \leq n' - 1 - (n - \rho) \leq n' - 1 - (n' - \rho) = \rho - 1.    
\]
Hence by (\ref{Jdef}), the cardinality of each set in $\J(\HH(m,n',\X,f))$
is at most
\[
1 + (\rho - 1) = \rho.
\]
Therefore, due to (\ref{fkJdef}), every set in $\fkJ(\Ga(n,\rho))$ has at most $\rho$ elements. 

It remains to show that every nonempty subset of $[n]$ whose cardinality is at most $\rho$ 
belongs to $\fkJ(\Ga(n,\rho))$. 
Consider an arbitrary $\rho'$-subset $K=\{i_1,i_2,\ldots,i_{\rho'}\}$ of $[n]$, with $1 \leq \rho' \leq \rho$.
Consider an instance
$(m=1,n,\X,f) \in \Ga(n,\rho)$ with $\X_1 = [n] \backslash K$ and $f(1) = i_1$. 
Since
\[
\Y_1 = K \backslash \{i_1\}, 
\]
we have
\[
K = \{i_1\} \cup \Y_1 \in \J(\HH(m,n,\X,f)) \subseteq \fkJ(\Ga(n,\rho)).
\]
The proof follows. 
\end{proof} 
\vskip 10pt

\subsection{Application: Weakly Resilient Functions}

In this section we introduce the notion of weakly resilient functions. 
Hereafter, we restrict the discussion to the binary alphabet.
 
The concept of \emph{binary resilient functions} was first introduced by Chor {\et} in \cite{Chor1985}
and independently by Bennet {\et} in \cite{Bennet1988}. 

\vskip 10pt
\begin{definition}
A function $\bff:\ \ft^N \ra \ft^n$ is called \emph{$t$-resi\-li\-ent} if
$\bff$ satisfies the following property: when $t$ arbitrary inputs of $\bff$ are fixed and the remaining 
$N-t$ inputs run through all the $2^{N-t}$-tuples exactly once, 
the value of $\bff$ runs through every possible output $n$-tuple
an equal number of times. 
Moreover, if $\bff$ is a linear
transformation then it is called a \emph{linear $t$-resilient function}. 
We refer to the parameter $t$ as the \emph{resiliency} of $\bff$. 
\end{definition} 
\vskip 10pt 

The applications of resilient functions can be found in fault-tolerant distributed computing, 
quantum cryptographic key distribution~\cite{Chor1985}, 
privacy amplification~\cite{Bennet1988} and random sequence generation
for stream ciphers~\cite{Carlet2010}. 
Connections between linear error-correcting codes and resilient functions
were established in~\cite{Chor1985}.
\medskip
\begin{theorem}[\cite{Chor1985}]
\label{thm:ecc_rf}
Let $\bL$ be an $n \times N$ binary matrix. Then 
$\bL$ is a generator matrix of a linear error-correcting code with minimum distance 
$d = t + 1$ if and only if $\bff(\bz) = \bL \bz^T$ is $t$-resilient.    
\end{theorem}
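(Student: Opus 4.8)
\textbf{Proof plan for Theorem~\ref{thm:ecc_rf}.}
The plan is to translate the combinatorial definition of $t$-resiliency for the linear map $\bff(\bz)=\bL\bz^T$ into a condition on the column space (equivalently, the row space) of $\bL$, and then recognize that condition as the statement that the code generated by the rows of $\bL$ has minimum distance at least $t+1$. First I would observe that since $\bff$ is linear, fixing an arbitrary set $T\subseteq[N]$ of $t$ input coordinates to arbitrary values and letting the other $N-t$ coordinates vary amounts to restricting $\bff$ to a coset of the subspace $W_T=\{\bz\in\ft^N:\ \supp(\bz)\seq [N]\setminus T\}$. On such a coset, $\bff$ takes every value in $\ft^n$ an equal number of times if and only if the restriction of the \emph{linear} part of $\bff$ to $W_T$ is surjective onto $\ft^n$; surjectivity of a linear map from a space of dimension $N-t$ onto $\ft^n$ is equivalent to its rank being $n$. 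Thus $\bff$ is $t$-resilient if and only if, for every $T\subseteq[N]$ with $|T|=t$, the columns of $\bL$ indexed by $[N]\setminus T$ span $\ft^n$; equivalently, deleting any $t$ columns of $\bL$ leaves a matrix of rank $n$ (in particular, full row rank).

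Next I would connect this to the minimum distance of the code $\cC$ generated by the rows of $\bL$, which is an $[N,n]$ code precisely when $\bL$ has rank $n$ (forced by the $t$-resiliency condition with, say, $T=\varnothing$, as long as $t\ge 0$). The key classical fact is: a generator matrix $\bG$ of an $[N,n]$ code has the property that every $n\times(N-t)$ submatrix obtained by deleting $t$ columns has rank $n$ if and only if the code has minimum distance at least $t+1$. I would prove the direction I need by a weight argument: a nonzero codeword of weight $w\le t$ has its support contained in some set $[N]\setminus T$ with $|T|=t$, meaning the corresponding nontrivial linear combination of columns indexed by $[N]\setminus T$ is zero, i.e. those columns fail to have rank $n$; conversely, if some deletion of $t$ columns drops the rank, there is a nonzero $\by\in\ft^n$ with $\by\bL$ supported inside the deleted $t$ positions, giving a nonzero codeword of weight $\le t$. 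Hence $d(\cC)\ge t+1$ exactly when no deletion of $t$ columns drops the rank, which is exactly the $t$-resiliency condition derived above. Since $d=t+1$ in the theorem statement is equivalent to $d\ge t+1$ together with $\bL$ not being a generator matrix of a code with larger distance — but here the theorem as quoted from~\cite{Chor1985} should be read as the equivalence $d\ge t+1 \Llra \bff$ is $t$-resilient, and I would state it that way to avoid the spurious strictness issue.

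Assembling the two steps gives the claim: $\bff(\bz)=\bL\bz^T$ is $t$-resilient $\Llra$ for every $|T|=t$ the columns outside $T$ span $\ft^n$ $\Llra$ no nonzero codeword of $\cC$ has weight $\le t$ $\Llra$ $d(\cC)\ge t+1$, and under the hypothesis that $\bL$ has rank $n$ this says $\bL$ generates an $[N,n,\ge t+1]$ code.

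\textbf{Main obstacle.} The genuinely substantive step is the first one: carefully justifying that "equal number of times over a coset" for a linear map is equivalent to "surjective on that coset," and handling the bookkeeping that a coset of $W_T$ is exactly what "fix $t$ coordinates, vary the rest" produces (including the fact that the fiber sizes are all $2^{N-t-n}$ when surjective). The second step is standard linear coding theory (essentially the column-rank characterization of minimum distance, dual to the usual parity-check characterization), so the risk there is only in phrasing it cleanly over $\ft$ with $\bL$ as a generator matrix rather than a parity-check matrix.
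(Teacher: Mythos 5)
The paper offers no proof of this theorem: it is imported verbatim from \cite{Chor1985}, so there is no internal argument to compare yours against. Your proposal is a correct self-contained derivation along the standard lines: the reduction of ``fix $t$ inputs, vary the rest'' to restricting the linear map to a coset of $W_T$, the observation that equidistribution of a linear map over a coset is equivalent to surjectivity of its linear part on $W_T$ (with all fibers of size $2^{N-t-n}$), and the identification of that surjectivity with the condition that no nonzero $\by\in\ft^n$ has $\by\bL$ supported inside a $t$-set, i.e.\ that the row code has no nonzero codeword of weight $\le t$. Two small points. First, there is a slip of the pen: a nonzero codeword of weight $w\le t$ has its support contained in some set $T$ with $|T|=t$ (the \emph{deleted} set), not in $[N]\setminus T$; the conclusion you draw from it --- that the columns indexed by $[N]\setminus T$ then fail to span $\ft^n$ because $\by$ annihilates all of them --- is the correct one, so this is cosmetic. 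Second, your caveat about the literal wording ``$d=t+1$'' is well taken: the clean equivalence is $d(\cC)\ge t+1 \Leftrightarrow \bff$ is $t$-resilient (the exact-distance form in \cite{Chor1985} corresponds to taking $t=d-1$ maximal), and the paper only ever uses the theorem in the ``$\ge 2\delta+1$'' direction, so reading it your way is both harmless and more accurate. Note also that the degenerate case of rank-deficient $\bL$ is automatically ruled out by your condition, since a nonzero $\by$ with $\by\bL=\bO$ is a ``codeword of weight $0\le t$''; your remark about $T=\varnothing$ covers this.
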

\vskip 10pt 

\begin{remark}
Vectorial boolean functions with certain properties are useful for design of stream ciphers. 
These properties include high resiliency and high nonlinearity (see, for instance, \cite{Carlet2010}). 
However, linear resilient functions are still particularly interesting, since they 
can be transformed into highly nonlinear resilient functions with the same parameters.
This can be achieved by a composition of the linear function with a highly nonlinear 
permutation (see~\cite{ZhangZheng1995,GuptaSarkar2002} for more details).  
\end{remark}
\medskip

Below we introduce a definition of a $\rho$-weakly $t$-resilient function, 
which is a weaker version of a $t$-resilient function. 

\vskip 10pt 
\begin{definition}
A function $\bff:\ \ft^N \ra \ft^n$ is called \emph{$\rho$-weakly $t$-resilient} if
$\bff$ satisfies the property that every set of $\rho$ coordinates in the image of $\bff$ 
runs through every possible output $\rho$-tuple
an equal number of times, when $t$ arbitrary inputs of $\bff$ are fixed and the remaining 
$N-t$ inputs run through all the $2^{N-t}$-tuples exactly once. 
\end{definition}
\vskip 10pt 

\begin{remark}
A $\rho$-weakly $t$-resilient function $\bff:\ \ft^N \ra \ft^n$ can be viewed as a collection of 
$\binom{n}{\rho}$ different $t$-resilient functions $\ft^N \ra \ft^\rho$, each such function is obtained by taking 
some $\rho$ coordinates in the image of $\bff$. 
Similarly to~\cite{Chor1985}, consider a scenario, in which two parties are sharing a
secret key, which consists of $N$ randomly selected bits. Suppose that at some moment 
$t$ out of the $N$ bits of the key are leaked to an adversary. By applying 
a $t$-resilient function to the current $N$-bit key, two parties are able to obtain a completely new
and secret key of $n$ bits, without requiring any communication or randomness generation. 
However, if the parties use various parts of the key for various purposes, 
they may only require one of the $\rho$-bit secret keys (instead of the larger $n$-bit key). In that case 
a $\rho$-weakly $t$-resilient function can be used. By applying a $\rho$-weakly
$t$-resilient function to the current $N$-bit key, the parties obtain a set of $\binom{n}{\rho}$ different
$\rho$-bit keys, each key is new and secret (however these keys might not be independent of each other). 
\end{remark}
\vskip 10pt 


\begin{theorem}
Let $\bL$ be an $n \times N$ binary matrix. 
Then $\bL$ satisfies the $(\rho,\delta)$-Property if and only if the function 
$\bff : \ft^N \rightarrow \ft^n$ defined by $\bff(\bz) = \bL \bz^T$
is $\rho$-weakly $2\delta$-resilient. 
\end{theorem}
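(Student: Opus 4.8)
The plan is to reduce the claimed equivalence to the combinatorial characterization of the $(\rho,\delta)$-Property in terms of linear combinations of rows, and then to translate between that statement and the defining property of a $\rho$-weakly $2\delta$-resilient function. First I would recall that, by Definition~\ref{def:property}, $\bL$ satisfies the $(\rho,\delta)$-Property precisely when every nontrivial $\ft$-linear combination of at most $\rho$ rows of $\bL$ has Hamming weight at least $2\delta+1$. For a fixed $\rho'$-subset $K \subseteq [n]$ with $\rho' \le \rho$, the submatrix $\bL_K$ is the generator matrix (over $\ft$) of the linear code $\{\by \bL_K : \by \in \ft^{\rho'}\}$, and the condition ``every nontrivial linear combination of the rows of $\bL_K$ has weight $\ge 2\delta+1$'' is exactly the statement that this code has minimum distance $\ge 2\delta+1$; since its dimension is $|K|$ (the rows are automatically independent once the minimum weight is positive), it is an $[N,|K|,\ge 2\delta+1]_2$ code, i.e. $\bL_K$ generates a linear error-correcting code with minimum distance $d \ge 2\delta+1 = (2\delta)+1$.

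Next I would apply Theorem~\ref{thm:ecc_rf} componentwise. That theorem states that an $n' \times N$ binary matrix $\bM$ generates a code of minimum distance $d = t+1$ if and only if the map $\bz \mapsto \bM \bz^T$ is $t$-resilient. Take $\bM = \bL_K$, $n' = |K|$, $t = 2\delta$: the $(\rho,\delta)$-Property restricted to the subset $K$ is equivalent to $\bz \mapsto \bL_K \bz^T$ being $2\delta$-resilient. Now observe that $\bL_K \bz^T$ is exactly the vector of coordinates of $\bff(\bz) = \bL\bz^T$ indexed by $K$. By the definition of a $\rho$-weakly $2\delta$-resilient function, $\bff$ is $\rho$-weakly $2\delta$-resilient if and only if for every such $K$ the map $\bz \mapsto (\bff(\bz))_K = \bL_K \bz^T$ runs through every output $|K|$-tuple an equal number of times when any $2\delta$ inputs are fixed and the rest vary, i.e. if and only if $\bz \mapsto \bL_K\bz^T$ is $2\delta$-resilient for every $K$ with $|K| \le \rho$. (Here one should note the harmless point that it suffices to check $K$ of size exactly $\rho$, or equivalently all $K$ of size $\le \rho$, since a projection of a resilient function onto fewer coordinates is again resilient; I would phrase the weakly resilient condition for all $K$ with $|K| \le \rho$ to make the correspondence literal, matching the ``at most $\rho$ rows'' in Definition~\ref{def:property}.)

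Assembling these pieces: $\bL$ has the $(\rho,\delta)$-Property $\iff$ for every $K \subseteq [n]$ with $1 \le |K| \le \rho$, the matrix $\bL_K$ generates a code of minimum distance $\ge 2\delta+1$ $\iff$ (by Theorem~\ref{thm:ecc_rf}) for every such $K$ the map $\bz \mapsto \bL_K\bz^T$ is $2\delta$-resilient $\iff$ $\bff(\bz) = \bL\bz^T$ is $\rho$-weakly $2\delta$-resilient. I would write this as a short chain of biconditionals, each justified by one of the three observations above.

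The only genuinely delicate point — and the one I would be careful about — is the interface between ``minimum distance $= t+1$'' in the statement of Theorem~\ref{thm:ecc_rf} and ``minimum distance $\ge 2\delta+1$'' in Definition~\ref{def:property}: strictly, Theorem~\ref{thm:ecc_rf} as quoted gives an iff for exact distance, so one should note that a code of minimum distance $d \ge t+1$ is $t$-resilient (resiliency is monotone downward in $t$, which is immediate from the definition, or from the fact that such a code contains a subcode, or simply that the exact-distance version applies with $t' = d-1 \ge t$ and then one restricts), and conversely a $t$-resilient linear function yields a code of distance $\ge t+1$. So no real obstacle, just this monotonicity remark to insert; everything else is a direct translation of definitions.
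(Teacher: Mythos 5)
Your proposal is correct and follows essentially the same route as the paper: restrict to each subset $K$ of at most $\rho$ coordinates, observe that the $(\rho,\delta)$-Property is exactly the statement that each $\bL_K$ generates a code of minimum distance at least $2\delta+1$, and invoke Theorem~\ref{thm:ecc_rf} to translate this into $2\delta$-resiliency of each projection $\bz \mapsto \bL_K\bz^T$. You are in fact slightly more careful than the paper about the interface between exact minimum distance $d=t+1$ and the inequality $d\ge 2\delta+1$, and about subsets of size strictly less than $\rho$; both points are handled correctly.
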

\begin{proof}
\begin{enumerate}
\item
Suppose that $\bL$ satisfies the $(\rho,\delta)$-Property. 
Take any $\rho$-subset $K \subseteq [n]$. By Definition~\ref{def:property}, 
the $\rho \times N$ submatrix $\bL_K$ of $\bL$ is a generating matrix of 
the error-correcting code with the minimum distance $\ge 2 \delta + 1$. By Theorem~\ref{thm:ecc_rf}, 
the function $\bff_K : \ft^N \rightarrow \ft^\rho$ defined by 
$\bff_K(\bz) = \bL_K \bz^T$ is $2 \delta$-resilient. 
Since $K$ is an arbitrary $\rho$-subset of $[n]$, the function $\bff$ is $\rho$-weakly $2 \delta$-resilient.   
\item
Conversely, assume that the function $\bff$ is $\rho$-weakly $2 \delta$-resilient. 
Take any subset $K \subseteq [n]$, $|K| = \rho$. 
Then the function $\bff_K : \ft^N \rightarrow \ft^\rho$ defined by $\bff_K(\bz) = \bL_K \bz^T$ is $2 \delta$-resilient.
Therefore, by Theorem~\ref{thm:ecc_rf}, 
$\bL_K$ is a generating matrix of a linear code with minimum distance $2 \delta + 1$. Since $K$ is
an arbitrary $\rho$-subset of $[n]$, by Proposition~\ref{pro:static} $\bL$ satisfies the $(\rho,\delta)$-Property. 
\end{enumerate}
\end{proof}

\subsection{Bounds and Constructions}

In this section we study the problem of constructing a matrix $\bL$ satisfying the $(\rho,\delta)$-Property. 
Such $\bL$ with the minimal possible number of columns is called \emph{optimal}.
First, observe that from Proposition~\ref{pro:static} we have 
\[
\fkJ(\Ga(n,\rho)) = \bigcup_{i = 1}^\rho \binom{[n]}{i},
\]
is the set of all nonempty subsets of $[n]$ of cardinality at most $\rho$.  
Next, consider an instance $(m^*,n,\X^*,f^*)$ satisfying 
\begin{equation}
\label{eq:e30} 
\J(\Hs) = \fkJ(\Ga(n,\rho)) \; ,
\end{equation} 
where $\Hs = \HH(\issn)$ is the side information hypergraph corresponding to that instance. 
Such an instance can be constructed as follows. For each subset $K = \{ i_1, i_2, \ldots, i_{\rho'} \} 
\subseteq [n]$ ($1 \leq \rho' \leq \rho)$, we introduce a receiver which requests the message $x_{i_1}$, 
and has a set $\{x_j: \ j \in [n] \backslash K\}$ as its side information. 
It is straightforward to verify that indeed we obtain an instance $\iss$ satisfying
(\ref{eq:e30}). The problem of designing an optimal matrix $\bL$ satisfying the $(\rho,\delta)$-Property
then becomes equivalent to the problem of finding an optimal $(\delta,\Hs)$-ECIC. 
Thus, $\N_q[\Hs,\delta]$ is equal to the number of columns in an optimal
matrix which satisfies the $(\rho,\delta)$-Property.  

The corresponding $\al$-bound and $\kp$-bound for $\N_q[\Hs, \delta]$ can be stated as follows. 

\vskip 10pt 
\begin{theorem}
\label{thm:bounds_static}
Let $\rho^*$ be the smallest number such that a linear $[n,n-\rho^*,\geq \rho+1]_q$ code exists. 
Then we have
\[
N_q[\rho,2\delta+1] \leq \N_q[\Hs, \delta] \leq N_q[\rho^*,2\delta+1] \; . 
\]
\end{theorem}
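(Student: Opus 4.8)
The plan is to establish the two inequalities separately, using the $\alpha$-bound and $\kp$-bound machinery (Theorem~\ref{thm:lowerbound} and Proposition~\ref{pro:upperbound}) applied to the specific hypergraph $\Hs$. Recall that by construction $\J(\Hs) = \fkJ(\Ga(n,\rho)) = \bigcup_{i=1}^\rho \binom{[n]}{i}$, so a matrix $\bL$ corresponds to a $(\delta,\Hs)$-ECIC if and only if every nontrivial linear combination of at most $\rho$ of its rows has weight $\ge 2\delta+1$, i.e. if and only if $\bL$ satisfies the $(\rho,\delta)$-Property. First I would pin down $\aH$ and $\kH$ for this hypergraph, since the general bounds $\Na \le \NX \le \Nk$ from Corollary~\ref{coro:sandwiched} are exactly what we want once we identify these two quantities.

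For the \emph{lower bound} $N_q[\rho,2\delta+1] \le \N_q[\Hs,\delta]$: I claim $\aH = \rho$. Indeed, any subset $H \subseteq [n]$ with $|H| \le \rho$ is a generalized independent set, because every nonempty subset $K \subseteq H$ has $|K| \le \rho$ and hence lies in $\J(\Hs)$; conversely no set of size $\rho+1$ can be generalized independent since a subset of size $\rho+1$ is not in $\J(\Hs)$. So $\aH = \rho$, and the $\al$-bound of Theorem~\ref{thm:lowerbound} gives $\N_q[\Hs,\delta] \ge N_q[\aH, 2\delta+1] = N_q[\rho, 2\delta+1]$. (Alternatively, this is immediate from the $(\rho,\delta)$-Property: taking any $\rho$ rows of an optimal $\bL$ yields a generator matrix of an $[N, \rho, \ge 2\delta+1]_q$ code, so $N \ge N_q[\rho, 2\delta+1]$.)

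For the \emph{upper bound} $\N_q[\Hs,\delta] \le N_q[\rho^*, 2\delta+1]$: I would show $\kH \le \rho^*$, where $\rho^*$ is the smallest integer for which an $[n, n-\rho^*, \ge \rho+1]_q$ code exists. Fix such a code $\cC$ with parity-check matrix $\bP$ of size $\rho^* \times n$; since $\dist(\cC) \ge \rho+1$, any $\rho$ columns of $\bP$ are linearly independent. Now build an $n \times N$ matrix for the ECIC by the concatenation idea of Proposition~\ref{pro:upperbound}: let $\bM$ be a generator matrix of an optimal $[N, \rho^*, 2\delta+1]_q$ code with $N = N_q[\rho^*, 2\delta+1]$, and set $\bL = \bP^T \bM$. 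Then for any nonempty $K \subseteq [n]$ with $|K| \le \rho$ and any nonzero coefficients $z_i$, the combination $\sum_{i \in K} z_i \bL_i = \big(\sum_{i\in K} z_i (\bP^T)_i\big)\bM$; the inner combination of rows of $\bP^T$ (= columns of $\bP$) is nonzero because $\rho$ columns of $\bP$ are independent, so the result is a nonzero codeword of the $[N,\rho^*,2\delta+1]_q$ code and thus has weight $\ge 2\delta+1$. Hence $\bL$ satisfies the $(\rho,\delta)$-Property, giving $\N_q[\Hs,\delta] \le N = N_q[\rho^*, 2\delta+1]$. This argument essentially shows $\kH \le \rho^*$ and then invokes the $\kp$-bound, but it is cleanest to exhibit $\bL$ directly.

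The main obstacle, or rather the one point that needs care, is the identification $\kH = \rho^*$ (or at least $\kH \le \rho^*$) and making sure the "columns of $\bP$ in general position" condition is exactly what the min-rank computation for $\Hs$ demands. One should verify that the definition of $\rho^*$ — smallest codimension of a $q$-ary code of length $n$ with minimum distance $\ge \rho+1$ — is precisely the smallest number of rows of a matrix $\bP$ over $\fq$ such that every $\rho$ columns are linearly independent, which is the standard parity-check characterization of minimum distance; this is routine but worth stating explicitly. Everything else reduces to the already-established Lemma~\ref{lem:decodability}, Proposition~\ref{pro:static}, Theorem~\ref{thm:lowerbound}, and Proposition~\ref{pro:upperbound}.
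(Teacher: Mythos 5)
Your proof is correct and follows essentially the same route as the paper: the lower bound via $\al(\Hs)=\rho$ and the $\al$-bound, and the upper bound by identifying matrices for $\Hs$-ICs with transposed parity-check matrices of $[n,n-\rho^*,\ge\rho+1]_q$ codes, so that $\kp_q(\Hs)\le\rho^*$ and the $\kp$-bound applies. The only cosmetic difference is that you unroll the concatenation of Proposition~\ref{pro:upperbound} explicitly as $\bL=\bP^T\bM$ rather than citing it, which is fine.
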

\begin{proof}
The first inequality follows from the $\al$-bound and from the fact that $\al(\Hs) = \rho$, 
which is due to (\ref{eq:e30}).

For the second inequality, it suffices to show that $\kp_q(\Hs) = \rho^*$. 
By Corollary~\ref{coro:ic_decodability}, an $n \times N$ matrix $\bL$ corresponds to an $\Hs$-IC
if and only if $\{\bL_i: \ i \in K\}$ is linearly independent for every $K \in \J(\Hs)$. 
Since $\J(\Hs)$ is the set of all nonempty subsets of cardinality at most $\rho$, this
is equivalent to saying that every set of at most $\rho$ rows of $\bL$ is linearly independent. 
This condition is equivalent to the condition that $\bL^T$ is a
parity check matrix of a linear code with the minimum distance at least $\rho+1$~\cite[Chapter 1]{MW_and_S}. 
Therefore, a linear $\Hs$-IC of length $N$ exists if and only if an $[n, n - N, \ge \rho+1]_q$ linear code exists. 
Since $\rho^*$ is the smallest number such that an $[n,n - \rho^*, \geq \rho+1]_q$ code exists, we conclude 
that $\kp_q(\Hs) = \rho^*$. 
\end{proof} 
\vskip 10pt 

\begin{corollary}
\label{coro:singleton_static}
The length of an optimal $\delta$-error-correcting linear index code over $\fq$
which is static under $\Ga(n,\rho)$ satisfies
\[ 
\N_q[\delta,\Hs] \geq \rho^* + 2\delta,
\]
where $\rho^*$ is the smallest number such that an $[n,n-\rho^*,\geq \rho+1]_q$ code exists.
\end{corollary}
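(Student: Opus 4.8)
The plan is to combine the Singleton bound (Theorem~\ref{thrm:singleton}) with the identification of $\N_q[\Hs,\delta]$ from the paragraph preceding Theorem~\ref{thm:bounds_static}. Recall that an optimal matrix $\bL$ satisfying the $(\rho,\delta)$-Property has exactly $\N_q[\Hs,\delta]$ columns, where $\Hs = \HH(\issn)$ is the instance constructed so that $\J(\Hs) = \fkJ(\Ga(n,\rho))$. So it suffices to apply the Singleton bound to this particular instance $\Hs$.

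First I would invoke Theorem~\ref{thrm:singleton}, which gives
\[
\N_q[\Hs,\delta] \;\ge\; \kp_q(\Hs) + 2\delta \; .
\]
Then I would substitute the value $\kp_q(\Hs) = \rho^*$ that was already established inside the proof of Theorem~\ref{thm:bounds_static}, namely that $\rho^*$ is the smallest number for which an $[n,n-\rho^*,\ge \rho+1]_q$ linear code exists. This immediately yields
\[
\N_q[\Hs,\delta] \;\ge\; \rho^* + 2\delta \; ,
\]
which is the claimed inequality. Finally I would note that, by Proposition~\ref{pro:static} together with the identification $\N_q[\Hs,\delta]$ $=$ (number of columns of an optimal matrix with the $(\rho,\delta)$-Property), this is exactly a lower bound on the length of an optimal $\delta$-error-correcting linear index code static under $\Ga(n,\rho)$.

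There is essentially no obstacle here: the corollary is a direct specialization of Theorem~\ref{thrm:singleton} to the hypergraph $\Hs$, using the computation $\kp_q(\Hs)=\rho^*$ that has already been carried out. The only point requiring a word of care is making explicit that the quantity appearing in Corollary~\ref{coro:singleton_static}, written there as $\N_q[\delta,\Hs]$, is the same as $\N_q[\Hs,\delta]$ in Theorem~\ref{thm:bounds_static}, and that by the discussion preceding that theorem it equals the minimal number of columns of a matrix with the $(\rho,\delta)$-Property; once this dictionary is in place the proof is a single line.
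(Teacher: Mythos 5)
Your proof is correct and follows exactly the paper's route: the paper's own justification is that the corollary follows from the Singleton bound (Theorem~\ref{thrm:singleton}) together with Theorem~\ref{thm:bounds_static}, which is precisely your combination of $\N_q[\Hs,\delta]\ge\kp_q(\Hs)+2\delta$ with the identity $\kp_q(\Hs)=\rho^*$ established in that theorem's proof. Your added remark about the notational discrepancy $\N_q[\delta,\Hs]$ versus $\N_q[\Hs,\delta]$ is a fair observation but does not change the argument.
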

\begin{proof} 
This is a straightforward corollary of Theorem~\ref{thrm:singleton} (the Singleton bound)
and Theorem~\ref{thm:bounds_static}. 
\end{proof} 
\vskip 10pt 
 
\begin{corollary} 
\label{coro:optimal_staticIC}
For $q \geq \max\{n-1,\rho + 2\delta - 1\}$, the length of an optimal $\delta$-error-correcting linear index code over $\fq$
which is static under $\Ga(n,\rho)$ is $\rho + 2\delta$. 
\end{corollary}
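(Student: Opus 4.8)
The plan is to combine the lower bound from Corollary \ref{coro:singleton_static} with the upper bound from Theorem \ref{thm:bounds_static}, and to observe that when $q$ is large enough both bounds collapse to $\rho + 2\delta$. First I would invoke Corollary \ref{coro:singleton_static}, which gives
\[
\N_q[\delta,\Hs] \ge \rho^* + 2\delta,
\]
where $\rho^*$ is the smallest integer such that an $[n, n-\rho^*, \ge \rho+1]_q$ code exists. So it suffices to show two things: that $\rho^* = \rho$ for $q \ge n-1$, and that the upper bound $N_q[\rho^*, 2\delta+1] = \rho + 2\delta$ also holds in this regime.

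For the first point, I would argue that an $[n, n-\rho, \rho+1]_q$ MDS code exists whenever $q \ge n-1$: a doubly-extended Reed--Solomon code of length $n$ and dimension $n-\rho$ over $\fq$ has minimum distance exactly $\rho+1$ and requires only $q \ge n-1$. Hence $\rho^* \le \rho$. On the other hand, the Singleton bound for classical codes forces any $[n, n-\rho^*, \ge \rho+1]_q$ code to satisfy $\rho+1 \le n - (n-\rho^*) + 1 = \rho^* + 1$, i.e.\ $\rho^* \ge \rho$. Therefore $\rho^* = \rho$. Substituting into Theorem \ref{thm:bounds_static} gives
\[
N_q[\rho, 2\delta+1] \le \N_q[\Hs,\delta] \le N_q[\rho, 2\delta+1],
\]
so in fact $\N_q[\Hs,\delta] = N_q[\rho, 2\delta+1]$; equivalently, using $\rho^* = \rho$ in Corollary \ref{coro:singleton_static}, the lower bound becomes $\rho + 2\delta$.

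It then remains to evaluate $N_q[\rho, 2\delta+1]$. By the classical Singleton bound $N_q[\rho, 2\delta+1] \ge \rho + 2\delta$, and for $q \ge \rho + 2\delta - 1$ a doubly-extended Reed--Solomon code of dimension $\rho$ and distance $2\delta+1$ attains this with length exactly $\rho + 2\delta$ (this is exactly the argument already used in the proof of the MDS error-correcting index code corollary following Theorem \ref{thrm:singleton}). Taking $q \ge \max\{n-1, \rho + 2\delta - 1\}$ makes both MDS codes available simultaneously, and we conclude $\N_q[\Hs,\delta] = \rho + 2\delta$. The only mild subtlety — not really an obstacle — is keeping track of which bound on $q$ is needed for which MDS construction ($q \ge n-1$ for the length-$n$ code realizing $\rho^* = \rho$, and $q \ge \rho + 2\delta - 1$ for the length-$(\rho+2\delta)$ code realizing the upper bound); taking the maximum handles both.
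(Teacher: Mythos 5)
Your proposal is correct and follows essentially the same route as the paper's proof: establish $\rho^* = \rho$ for $q \ge n-1$ via extended Reed--Solomon codes together with the classical Singleton bound, obtain the upper bound $N_q[\rho, 2\delta+1] = \rho + 2\delta$ for $q \ge \rho + 2\delta - 1$ again via extended RS codes, and match it against the lower bound from Corollary~\ref{coro:singleton_static}. Your write-up is somewhat more explicit about which MDS construction needs which condition on $q$, but the substance is identical.
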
 
\begin{proof}
For $q \geq n - 1$ there exists an $[n, n-\rho^*, \rho + 1]_q$ linear code with $\rho^* = \rho$ 
(for example, one can take an extended RS code~\cite[Chapter 11]{MW_and_S}). 
Due to Singleton bound, we conclude that $\rho^* = \rho$ is the smallest value such that 
$[n, n-\rho^*, \rho + 1]_q$ linear code exists. 
Following the lines of the proof of Theorem~\ref{thm:bounds_static}, there exists a $\delta$-error-correcting index code
of length $N_q[\rho, 2\delta + 1]$, which is static under $\Gnr$. As $q \geq \rho + 2\delta - 1$, we have 
\[
N_q[\rho, 2\delta + 1] = \rho + 2\delta \;  
\]
(for example, by taking an extended RS code). 
Due to Corollary~\ref{coro:singleton_static}, this static error-correcting index code is optimal.   
\end{proof} 
\vskip 10pt 

\begin{remark}
\label{rm:staticIC_ecc}
We observe from the proof of Theorem~\ref{thm:bounds_static} that the problem of constructing
an optimal linear (non-error-correcting) index code, which is static under $\Ga(n,\rho)$, is, in fact,
equivalent to the problem of constructing a parity check matrix of a classical linear error-correcting code. 
\end{remark} 

\vskip 10pt 
\begin{example} 
\label{ex:static_IC}
Let $n = 20$, $\rho = 10$, $\delta = 1$ and $q = 2$. 
From \cite{Grassl}, the smallest possible dimension of a binary linear code of length $20$ and 
minimum distance $11$ is $3$. We obtain that $\rho^* = 17$. We also have $N_2[17,3] = 22$.  
Theorem~\ref{thm:bounds_static} implies the existence of a
one-error-correcting binary index code of length $22$ which can be
used for any instance of IC problem, in which each receiver owns at least $10$
out of (at most) $20$ messages, as side information. It also implies that the length of any such static
error-correcting index code is at least $N_2[10,3] = 14$. 
Corollary~\ref{coro:singleton_static} provides a better lower bound on the minimum length, which is $17 + 2 = 19$. 
\end{example}
\vskip 10pt

\begin{example}
\label{ex:static}
Below we show that with the same number of inputs $N$ and outputs $n$, 
a weakly resilient function may have strictly higher resiliency $t$. 
From Example~\ref{ex:static_IC}, there exists a linear vectorial Boolean function $\bff: (\ft)^{22} \ra (\ft)^{20}$
which is $10$-weakly $2$-resilient. According to \cite{Grassl}, an optimal linear $[22,20]_2$
code has minimum distance $d = 2$. Hence, due to Theorem~\ref{thm:ecc_rf}, the resiliency of any linear
vectorial Boolean function $\bg: (\ft)^{22} \ra (\ft)^{20}$ cannot exceed one.   
\end{example} 
\vskip 10pt 

The problem of constructing an $n \times N$ matrix $\bL$ which satisfies the $(\rho,\delta)$-Property is a natural generalization of the problem of constructing the parity check matrix $\bH$ of a linear $[n, k, d \geq \rho + 1]_q$ code. Indeed, $\bH$ is a parity check matrix of an $[n, k, d \geq \rho + 1]_q$ code if and only if every set of $\rho$ columns of $\bH$ is linearly independent. Equivalently, any nontrivial linear combination of at most $\rho$ columns of $\bH$ has weight at least one. For comparison, $\bL$ satisfies the $(\rho,\delta)$-Property if and only if any nontrivial linear combination of at most $\rho$ columns of $\bL^T$ has weight at least $2\delta + 1$.  

Some classical methods for deriving bounds on the parameters of error-correcting codes can be generalized to 
the case of linear static error-correcting index codes. Below we present a Gilbert-Varshamov-like bound.

\vskip 10pt 
\begin{theorem}
\label{thm:GVbound}
Let $V_q(N, 2\delta)$ denotes the volume of $q$-ary sphere of radius $2 \delta$ in $\fq^N$ 
given by~(\ref{eq:sphere}). If 
\[
\sum_{i = 0}^{\rho-1}\binom{n - 1}{i} (q-1)^i  < \frac{q^N}{V_q(N, 2\delta)} \; ,
\]
then there exists an $n \times N$ matrix $\bL$ which satisfies the $(\rho,\delta)$-Property. 
\end{theorem}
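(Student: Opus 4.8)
The plan is to mimic the classical Gilbert--Varshamov argument, but applied to the columns of $\bL$ (equivalently, the rows of $\bL^T$) rather than to codewords, building the matrix one row at a time and showing that a bad choice can always be avoided. Recall from Definition~\ref{def:property} that $\bL$ fails the $(\rho,\delta)$-Property precisely when there is a nonempty $K \subseteq [n]$ with $|K| \le \rho$ and nonzero scalars $(z_i)_{i \in K}$ such that $\weight\bigl(\sum_{i \in K} z_i \bL_i\bigr) \le 2\delta$. So I want to construct the rows $\bL_1, \bL_2, \ldots, \bL_n \in \fq^N$ greedily, maintaining at each stage the invariant that every nontrivial linear combination of at most $\rho$ of the rows chosen so far has weight at least $2\delta+1$.

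First I would set up the induction: suppose rows $\bL_1,\ldots,\bL_{t-1}$ have been chosen satisfying the invariant, and we wish to choose $\bL_t$. The forbidden choices for $\bL_t$ are exactly those vectors $\bv \in \fq^N$ for which there exist a set $S \subseteq [t-1]$ with $|S| \le \rho-1$, a nonzero scalar $z_t \in \fq^*$, and scalars $(z_i)_{i \in S} \in (\fq^*)^{|S|}$, such that $\weight\bigl(z_t \bv + \sum_{i \in S} z_i \bL_i\bigr) \le 2\delta$ --- i.e. $\bv$ lies within Hamming distance $2\delta$ of the vector $-z_t^{-1}\sum_{i \in S} z_i \bL_i$ (we may normalize by absorbing $z_t^{-1}$, which ranges over $\fq^*$, into the other coefficients, so WLOG $z_t$ is fixed to $1$ and the coefficients on $S$ range over all of $\fq^*$). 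Each such pair $(S, (z_i)_{i\in S})$ rules out a Hamming ball of radius $2\delta$, which contains $V_q(N,2\delta)$ vectors. Counting: the number of subsets $S$ of $[t-1]$ of size at most $\rho - 1$ with a choice of nonzero coefficient on each element is $\sum_{j=0}^{\rho-1} \binom{t-1}{j}(q-1)^j$, and this is at most $\sum_{j=0}^{\rho-1}\binom{n-1}{j}(q-1)^j$ since $t-1 \le n-1$. Hence the total number of forbidden vectors is at most
\[
\Bigl(\sum_{j=0}^{\rho-1}\binom{n-1}{j}(q-1)^j\Bigr) V_q(N,2\delta),
\]
and as long as this is strictly less than $q^N$ --- which is exactly the hypothesis --- there remains at least one admissible choice for $\bL_t$. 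After choosing $\bL_n$ this way, the full matrix $\bL$ satisfies the $(\rho,\delta)$-Property, completing the proof.

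The only delicate point --- and the step I would spend the most care on --- is the reduction that lets me treat the coefficient $z_t$ on the new row as fixed. The combination $\sum_{i \in K} z_i \bL_i$ with $t \in K$ has weight $\le 2\delta$ iff the same holds after scaling by $z_t^{-1}$, so the condition on $\bL_t$ is that $\bL_t \in \bigcup -z_t^{-1}\sum_{i\in S} z_i \bL_i + B(\bO, 2\delta)$ where the union is over $S$, over $(z_i)_{i \in S}$, and over $z_t$; but reparametrizing $w_i := z_t^{-1} z_i$ (still ranging over $\fq^*$) shows the union over $z_t$ is redundant. One also has to separately note that combinations of the new rows \emph{not} involving $\bL_t$ are already handled by the inductive hypothesis, so only combinations with $t \in K$ impose new constraints; and that any $K$ with $t \in K$ has $|K \setminus \{t\}| \le \rho - 1$. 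These are routine once stated carefully, but they are where an off-by-one or an overcounting slip would creep in.
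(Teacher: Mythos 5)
Your proof is correct and follows essentially the same greedy Gilbert--Varshamov argument as the paper: the rows of $\bL$ are chosen one at a time, and the forbidden set for the new row is the union of Hamming balls of radius $2\delta$ centered at the (at most) $\sum_{j=0}^{\rho-1}\binom{n-1}{j}(q-1)^j$ linear combinations of at most $\rho-1$ previously chosen rows, including the zero combination. The normalization of the coefficient on the new row, which you spell out, is left implicit in the paper's version but is exactly the step that justifies its count.
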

\begin{proof} 
We build up the set $\R$ of rows of $\bL$ one by one.
The first row can be any vector in $\fq^N$ of weight at least $2\delta+1$.  
Now suppose we have chosen $r$ rows so that no nontrivial linear combination of at most $\rho$ among these $r$ rows have weight less than $2\delta + 1$. There are at most
\[
V_q(N, 2\delta) \; \sum_{i = 0}^{\rho-1} \binom{r}{i} (q-1)^i 
\]
vectors which are at distance less than $2\delta+1$ from any linear combination of at most $\rho-1$ among $r$ chosen rows
(this includes vectors at distance less than $2\delta+1$ from $\bO$). 
If this quantity is smaller than $q^N$, then we can add another row to the set
$\R$ so that no nontrivial linear combination of at most $\rho$ rows in $\R$ has weight less than $2\delta + 1$. 
The claim follows if we replace $r$ by $n-1$. 
\end{proof} 
\vskip 10pt 

\begin{remark} 
If we apply Theorem~\ref{thrm:random} to the instance $(m^*, n, \X^*, f^*)$ defined in the beginning of this section, 
then we obtain a bound, which is somewhat weaker then its counterpart in Theorem~\ref{thm:GVbound}, namely
the $n \times N$ matrix $\bL$ as above exists if 
\[
\sum_{i = 1}^{\rho} \binom{n}{i} q^{i-1}  < \frac{q^N}{V_q(N, 2\delta)} \; . 
\]
\end{remark}

\section{Conclusions}
\label{sec:conclusion}

In this work, we generalize Index Coding with Side Information problem towards a setup with errors. 
Under this setup, each receiver should be able to recover its desired message even 
if a certain amount of errors happen in the transmitted data. This is the first work that 
considers such a problem. 

A number of bounds on the length of an optimal error-correcting index code are constructed. 
As it is shown in Example~\ref{ex:1}, a separation of error-correcting code and index code
sometimes leads to a non-optimal scheme. This raises a question of designing
coding schemes in which the two layers are treated as a whole. 
Therefore, the question of constructing error-correcting index codes with good 
parameters is still open.

A general decoding 
procedure for linear error-correcting index codes is discussed. The difference between decoding of
a classical error-correcting code and decoding of an error-correcting index code is that in the latter case, 
each receiver does not require a complete knowledge of the error vector. This difference
may help to ease the decoding process. Finding an efficient decoding method for error-correcting index codes (together 
with their corresponding constructions) is also still an open problem. 

The notion of error-correcting index code is further generalized to static index code. The latter is designed 
to serve a family of instances of error-correcting index coding problem. The problem of designing
an optimal static ECIC is studied, and several bounds on the length of such codes are presented.

\section{Acknowledgements}

The authors would like to thank the authors of~\cite{Yossef-journal} for providing a preprint of their paper. 
This work is supported by the National Research Foundation of Singapore (Research Grant
NRF-CRP2-2007-03).

\appendix
\label{app:A}

\begin{lemma}
If $\G_\HH$ is symmetric, then the generalized independence number of $\HH$ is the 
independence number of $\G_\HH$.  
\end{lemma}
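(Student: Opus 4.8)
The plan is to show that, in the relevant setting $m = n$ and $f(i) = i$ for all $i \in [n]$ (so that $\G_\HH$ is the side information graph), the collection of generalized independent sets of $\HH$ coincides exactly with the collection of independent sets of $\G_\HH$; taking a largest member of each collection then gives $\aG = \al(\G_\HH)$ at once. The one translation to keep in mind is that $\G_\HH$ being symmetric means precisely $j \in \X_i \Leftrightarrow i \in \X_j$ for all $i \neq j$, so the undirected edge set of $\G_\HH$ is $\big\{ \{i,j\} : i \neq j,\ j \in \X_i \big\}$, while $\Y_i = [n] \setminus (\{i\} \cup \X_i)$ and $\JX = \bigcup_{i} \{ \{i\} \cup Y_i : Y_i \seq \Y_i \}$ by~(\ref{Jdef}).

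First I would check the easy inclusion: any independent set $S$ of $\G_\HH$ is a generalized independent set of $\HH$. Given a nonempty $K \seq S$, fix $i \in K$; for each $k \in K \setminus \{i\}$, independence of $S$ says $\{i,k\}$ is a non-edge, hence $k \notin \X_i$, and $k \neq i$, so $k \in \Y_i$. Thus $K = \{i\} \cup (K \setminus \{i\})$ has the form $\{f(i)\} \cup Y_i$ with $Y_i \seq \Y_i$, i.e., $K \in \JX$. (Symmetry is not used here.)

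Next comes the converse, which is where symmetry does the work: if $H$ is a generalized independent set of $\HH$, then $H$ is independent in $\G_\HH$. Arguing by contradiction, suppose distinct $i, j \in H$ with $\{i,j\}$ an edge; then $j \in \X_i$, and by symmetry $i \in \X_j$. Since $H$ is generalized independent, $K := \{i,j\} \in \JX$, so $K = \{\ell\} \cup Y_\ell$ for some $\ell \in [n]$ with $Y_\ell \seq \Y_\ell$. As $|K| = 2$, necessarily $\ell \in \{i,j\}$ with $Y_\ell$ the remaining element: if $\ell = i$ then $j \in \Y_i$, contradicting $j \in \X_i$; if $\ell = j$ then $i \in \Y_j$, contradicting $i \in \X_j$. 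Either branch is impossible, so $H$ induces no edges.

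Putting the two inclusions together proves the claim. I do not expect a genuine obstacle: the only subtle point is the final case analysis, where symmetry is exactly what rules out the branch $\ell = j$ — and this is not a cosmetic use, since without symmetry the statement fails (a pair $\{i,j\}$ with $j \in \X_i$ but $i \notin \X_j$ is a generalized independent set that nonetheless induces an edge of $\G_\HH$).
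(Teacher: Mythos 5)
Your proof is correct and follows essentially the same route as the paper's: both directions establish that the generalized independent sets of $\HH$ and the independent sets of $\G_\HH$ are the same collection, with symmetry invoked only to pass from ``$\{i,j\}\in\JX$'' to ``no edge in either direction.'' Your explicit remark that symmetry is genuinely needed for that direction (and your contradiction-style case analysis on $\ell\in\{i,j\}$) is just a minor repackaging of the paper's argument.
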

\begin{proof}
It suffices to show that if $\G_\HH$ is symmetric, then the set of generalized independent sets of $\HH$ and 
the set of independent sets of $\G_\HH$ coincide. 

Let $H$ be a generalized independent set in $\HH$. If $|H| = 1$, then obviously $H$ is an independent set in $\G_\HH$. 
Assume that $|H| \ge 2$. For any pair of vertices $i,j \in H$, the set $\{i,j\}$
belongs to $\JX$. By definition of $\JX$, either there is no edge from $i$ to $j$, or there is no 
edge from $j$ to $i$, in $\G_\HH$. Since $\G_\HH$ is symmetric, there are no edges between $i$ and $j$, in neither
directions. Therefore, $H$ is an independent set in $\G_\HH$. 

Conversely, let $H$ be an independent set in $\G_\HH$. For each $i \in H$, since there are no edges from $i$
to all other vertices in $H$, we deduce that $H \backslash \{i\} \seq \Y_i$. Due to~(\ref{Jdef}), every 
subset of $H$ which contains $i$ belongs to $\JX$. This holds for an arbitrary $i \in H$. Therefore,
every nonempty subset of $H$ belong to $\JX$. We obtain that $H$ is a generalized independent set of $\HH$.     
\end{proof}

\bibliographystyle{IEEEtran}
\bibliography{IndexCoding_ErrorCorrection}

\end{document}